\title{Overlapping community detection in networks via sparse spectral decomposition}
\author[$1$, $2$]{Jes\'us Arroyo\thanks{jesus.arroyo@jhu.edu}}
\author[$3$]{Elizaveta Levina\thanks{elevina@umich.edu}}
\affil[$1$]{\small Department of  Mathematics, University of Maryland, College Park}
\affil[$2$]{\small Department of Applied Mathematics and Statistics, Johns Hopkins University}
\affil[$3$]{\small Department of Statistics, University of Michigan, Ann Arbor}
\date{}
\begin{document}

\maketitle

\begin{abstract}
We consider the problem of  estimating overlapping community memberships in a network, where each node can belong to multiple communities.  More than a few communities per node are difficult to both estimate and interpret, so we focus on sparse node membership vectors.  Our algorithm is 
 based on sparse principal subspace estimation with iterative thresholding.  The method is computationally efficient, with computational cost equivalent to estimating the leading  eigenvectors of the adjacency
matrix, and does not require an additional clustering step, unlike spectral clustering methods.  We show that a fixed point of the algorithm corresponds to correct node memberships under a version of the stochastic block model. The methods are evaluated empirically on simulated and real-world networks, showing good statistical performance and computational efficiency.
\end{abstract}

\section{Introduction}

Networks have become a popular representation of complex data that appear in different fields such as biology, physics, and the social sciences. A network represents units of a system as nodes, and the interactions between them as edges.   A network can encode relationships between people in a social environment \citep{wasserman1994social}, connectivity between areas of the brain \citep{bullmore2009complex} or interactions between proteins \citep{schlitt2007current}. The constant technological advancements have increased our ability to collect network data on a large scale, with potentially millions of nodes in a network.  Parsimonious models are needed in order to obtain meaningful interpretations of such data, as well as computationally efficient methods.   

Communities are a structure of interest in the analysis of networks, observed in many real-world systems 
\citep{girvan2002community}. Usually, communities are defined as clusters of nodes that have stronger connections to each other than to the rest of the network. Finding these communities allows for a more parsimonious representation of the data which is often meaningful in the system of interest.  For example, communities can represent functional areas of the brain \citep{schwarz2008community,power2011functional}, political affinity in social networks \citep{adamic2005political,conover2011political,latouche2011overlapping}, research areas in citation networks \citep{ji2016coauthorship}, and many others.

The stochastic block model (SBM) \citep{holland1983stochastic} is a simple statistical model for a network with communities, well understood by now; see \cite{Abbe2017} for a review. Under the SBM, a pair of nodes is connected with a probability that only depends on the community memberships of these nodes. The SBM  allows to represent any type of connectivity structure between the communities in the network, such as affinity, disassortativity, or core-periphery (see for example \cite{cape2019spectral}).
While the SBM itself is too simple to capture some aspects of real-world networks, many extensions have been proposed to incorporate more complex structures such as \emph{hubs} \citep{ball2011efficient}, or nodes that belong to more than one community \citep{airoldi2009mixed,latouche2011overlapping,Zhang2020}, which lead to models with overlapping communities. 

Overlapping community models characterize each node by a membership vector, indicating its degree of belonging to different communities.    While in principle all entries of a membership vector can be positive \citep{airoldi2009mixed},
a sparse membership vector is more likely to have  a  meaningful interpretation.  At the same time, allowing for a varying degree of belonging to a community adds both flexibility and interpretability relative to binary membership overlapping community  models such as \cite{latouche2011overlapping}. 
In this paper, we focus on estimating sparse overlapping community membership vectors with continuous entries, so that most nodes belong to only one or few communities, and the degree to which they belong to a community can vary.  The sparsest case where each node belongs to exactly one community corresponds to the classic community detection setting, and its success in modeling and analyzing real-world networks in many different fields \citep{porter2009communities} supports the sparse nature of community memberships.

Existing statistical models for overlapping community detection include both binary membership models, where each node either belongs to a community or does not (e.g., \cite{latouche2011overlapping}), and continuous membership models which allow each node to have a different level of association with each community \citep{airoldi2009mixed,ball2011efficient,psorakis2011overlapping,Zhang2020}.  Binary memberships are a natural way to induce sparsity, but the binary models are less flexible, and fitting them be computationally intensive since they involve solving a discrete optimization problem. On the other hand, continuous memberships are not able to explicitly model sparsity, and the resulting estimates often assign most of the nodes to many or even all communities. To obtain sparse memberships, an ad hoc post-processing step can be applied \citep{gregory2010finding,lancichinetti2011finding}, 
but is likely to lead to a less accurate fit to the data.   Another approach to induce sparse memberships is to incorporate sparsity-inducing priors into the model, for example via the discrete hierarchical Dirichlet process \citep{wang2009decoupling} or the Indian buffet process \citep{williamson2010ibp} that have been introduced in the topic modeling literature.

The problem of estimating overlapping community memberships has been approached from different perspectives; see for example \cite{xie2013overlapping,da2020comparative}, and references in Section \ref{sec:comparison_methods}.  In particular, spectral methods for community detection are popular due to their computational scalability and theoretical guarantees \citep{newman2006finding,rohe2011,lyzinski2014perfect,le2017concentration}. Many statistics network models make a low-rank assumption on the  matrix $\bP=\e[\bA]$  that characterizes the edge probabilities, and in most models with communities the principal subspace of $\bP$ contains the information needed to identify the communities.    Spectral methods for community detection exploit this fact by computing an eigendecomposition of the network adjacency matrix $\bA$, defined by $\bA_{ij} = 1$ if there is an edge from $i$ to $j$ and 0 otherwise, followed by a post-processing step applied to the  leading eigenvectors
    to recover memberships. Several approaches of this type have been recently developed, with different ways of clustering the rows of the leading eigenvectors  \citep{Zhang2020,rubin2017consistency,jin2017estimating,mao2017mixed,mao2018overlapping,mao2020estimating}.

     In contrast to other spectral methods, here we present a new approach for detecting overlapping communities based on estimating a {\em sparse} basis for the principal subspace of the network adjacency matrix in which the pattern of non-zero values  contains the information about community memberships. Our approach can be seen as an analogue to finding sparse principal components of a matrix \citep{jolliffe2003modified,zou2006sparse,ma2013sparse}, but with the important difference that we consider a non-orthogonal sparse basis of the principal subspace to allow for overlaps in communities.   Our method has thus the potential to estimate overlapping  community memberships more accurately than traditional spectral methods, with the same low computational cost of computing the leading eigenvectors of a matrix. We will  demonstrate this both on simulated networks with overlapping and non-overlapping communities, and on real-world networks.

\section{A sparse non-orthogonal eigenbasis decomposition}

As mentioned in the introduction, we consider binary symmetric adjacency matrices $\bA\in\{0,1\}^{n\times n}$, with no self-loops, i.e.,  $\bA_{ii}=0$.  We model the network as an inhomogeneous Erd\"os-R\'enyi random graph \citep{Bollobas2007}, meaning that the upper triangular entries of $\bA$ are independent Bernoulli random variables with potentially different edge probabilities $\bP_{ij}=\p(\bA_{ij}=1)$ for $i,j\in[n]$, $i<j$, contained in a symmetric probability matrix $\bP\in\real^{n\times n}$. 

Our goal is to recover an overlapping community structure in  $\bA$ by estimating an appropriate sparse basis of the invariant subspace of $\bP$.  The rationale is that when $\bP$ is even approximately low rank, most relevant information on communities is contained in the leading eigenvectors of $\bP$, and can be retrieved by looking at a particular basis of its invariant subspace. We will assume that rank of $\bP$ is $K<n$.  The principal subspace of $\bP$ can be described with a full rank matrix $\bV\in\real^{n\times K}$, with columns of $\bV$ forming a basis of this space. Most commonly, $\bV$ is defined as the $K$ leading eigenvectors of $\bP$, but for the purposes of recovering community membership, we focus on finding a sparse \emph{non-negative eigenbasis} of $\bP$, that is, a matrix $\bV$ for which $\bV_{ik}\geq 0$ for all $i\in[n], k\in[K]$ and $\bP = \bV\bU^\top$ for some full rank matrix $\bU\in\real^{n\times K}$.  Note that this is different from the popular non-negative matrix factorization problem \citep{lee1999learning}, as we do not assume that $\bU$ is a non-negative matrix nor do we try to estimate it. 

If $\bP$ has a sparse non-negative basis of its principal subspace $\bV\in\real^{n\times K}$, this basis is not unique, as any column scaling or permutation of $\bV$ will give another non-negative basis.    Among these, we are interested in finding a sparse non-negative eigenbasis $\bV$, since we will relate the non-zeros of $\bV$ to community memberships. The following proposition provides a sufficient condition for identifiability of the non-zero pattern in $\bV$ up to a permutation of its columns.   The proof is included in the Appendix.

\begin{proposition} \label{prop:identifiabilitySPCA}
	Let $\bP\in\Bbb{R}^{n\times n}$ be a symmetric matrix of rank $K$. Suppose that there exist 
	a matrix $\bV\in\Bbb{R}^{n\times K}$ that satisfies the next conditions:
	\begin{itemize}
		\item \emph{Eigenbasis:} $\bV$ is a basis of the column space of $\bP$, that is,  $\bP = \bV\bU^\top$, for some $\bU\in\real^{n\times K}$. 
		\item \emph{Non-negativity:} The entries of $\bV$ satisfy  $\bV_{ik}\geq 0$ for all $i\in[n],k\in[K]$
		\item \emph{Pure rows:} For each $k=1,\ldots, K$ there exists at least one row $i_k$ of $\bV$ such that $\bV_{i_kk}>0$ and $\bV_{i_{k}j}=0$ for $j\neq k$.
	\end{itemize}
	If another matrix $\widetilde{\bV}\in\Bbb{R}^{n\times K}$ satisfies these conditions, then there exists a permutation matrix $\bQ\in\{0,1\}^{K\times K}$, $\bQ^\top \bQ=\bI_K$, such that
	\begin{equation*}
	\supp(\bV)	= \supp(\widetilde{\bV}\bQ),
	\end{equation*}
	where $\supp(\bV) = \{(i,j)|\bV_{ij}\neq 0\}$ is the set of non-zero entries of $\bV$.

\end{proposition}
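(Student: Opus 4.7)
The plan is to show that $\widetilde{\bV}$ differs from $\bV$ by a generalized permutation (monomial) matrix, which will immediately yield the support equality. Since both $\bV$ and $\widetilde{\bV}$ are bases of the same $K$-dimensional column space of $\bP$, there exists an invertible matrix $\bM\in\real^{K\times K}$ such that $\widetilde{\bV} = \bV\bM$. If I can factor $\bM = \bQ\bD$ with $\bQ$ a permutation matrix and $\bD$ a positive diagonal matrix, then $\bV = \widetilde{\bV}\bD^{-1}\bQ^\top$, and since scaling columns by positive constants preserves the non-zero pattern, I would conclude $\supp(\bV) = \supp(\widetilde{\bV}\bQ^\top)$, which is the desired statement.

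The heart of the argument is therefore to prove $\bM$ is monomial. I would do this by establishing that both $\bM$ and $\bM^{-1}$ are entrywise nonnegative, then invoking the classical fact that a nonnegative matrix whose inverse is also nonnegative must be a generalized permutation matrix. For nonnegativity of $\bM$: fix $k$ and let $i_k$ be a pure row of $\bV$, so row $i_k$ of $\bV$ equals $c_k\mathbf{e}_k^\top$ for some $c_k>0$. Then row $i_k$ of $\widetilde{\bV} = \bV\bM$ equals $c_k$ times the $k$-th row of $\bM$. Nonnegativity of $\widetilde{\bV}$ forces this row of $\bM$ to be nonnegative, and ranging over $k$ gives $\bM\geq 0$ entrywise. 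The symmetric argument applied to the pure rows of $\widetilde{\bV}$, together with the relation $\bV = \widetilde{\bV}\bM^{-1}$, gives $\bM^{-1}\geq 0$.

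For the classical inverse-of-nonnegative lemma I would include a short direct argument: suppose for contradiction that row $i$ of $\bM$ has two positive entries, in columns $j_1\neq j_2$. For any $k\neq i$, the off-diagonal identity $(\bM\bM^{-1})_{ik}=0$ is a sum of nonnegative terms, so each summand $\bM_{ij}(\bM^{-1})_{jk}$ must vanish; in particular $(\bM^{-1})_{j_1,k}=(\bM^{-1})_{j_2,k}=0$ for all $k\neq i$. Thus rows $j_1$ and $j_2$ of $\bM^{-1}$ are each supported only in column $i$, making them linearly dependent and contradicting invertibility of $\bM^{-1}$. Hence each row of $\bM$ has exactly one positive entry; an analogous column argument (or the fact that $\bM^{-1}$ satisfies the same hypotheses) completes the proof that $\bM$ is monomial.

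The main obstacle is cleanly extracting two-sided nonnegativity of $\bM$ from the pure-row condition; once both $\bM,\bM^{-1}\geq 0$ are in hand, the monomial structure follows from the standard lemma above. The conclusion is necessarily only up to a column permutation, since the three hypotheses on $\bV$ are themselves invariant under permuting its columns, so this is the sharpest statement available.
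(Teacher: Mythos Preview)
Your proposal is correct and follows essentially the same route as the paper: both arguments use the pure rows of $\bV$ and of $\widetilde{\bV}$ to show that the invertible change-of-basis matrix and its inverse are entrywise nonnegative, and then conclude that this matrix must be a generalized permutation. Your write-up is in fact a bit cleaner, since you avoid the auxiliary row-normalization $(\btheta,\bZ)$ the paper introduces and you include an explicit proof of the ``nonnegative with nonnegative inverse $\Rightarrow$ monomial'' lemma that the paper merely invokes.
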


We connect a non-negative non-orthogonal basis to community memberships through the  \emph{overlapping continuous community assignment model} (OCCAM) of \cite{Zhang2020}, a general model for overlapping communities that encompasses, as special cases, multiple other popular overlapping models \citep{latouche2011overlapping,ball2011efficient,jin2017estimating, mao2018overlapping}.   Under OCCAM, each node is associated with a vector $\bz_i=[z_{i1}, \ldots, z_{iK}]^\top\in\real^K$, $i=1, \ldots,n$, where $K$ is the number of communities in the network. Given $\bZ=[\bz_1 \cdots \bz_n]^\top\in\real^{n\times K}$ and parameters $\alpha>0$, $\bTheta\in\real^{n\times n}$ and $\bB\in\real^{K\times K}$ to be explained below, the probability matrix $\bP=\e[\bA]$ of OCCAM can be expressed as
\begin{equation}
\bP =  \alpha \bTheta \bZ \bB \bZ^T \bTheta. \label{eq:OCCAM}
\end{equation}
For identifiability, OCCAM assumes that $\alpha$ and all entries of $\bTheta$, $\bB$ and $\bZ$ are non-negative, $\bTheta$ is a diagonal matrix with $\text{diag}(\bTheta)=\btheta\in\real^n$ and $\sum_{i=1}^n\btheta_i=n$,  $\|\bz_i\|_2 =\big(\sum_{j=1}^Kz_{ij}^2\big)^{1/2}= 1$ for all $i\in[n]$, and $\bB_{kk}= 1$ for all $k\in[K]$. In this representation, the row $\bz_i$ of $\bZ$ is viewed as the community membership vector of node $i$. A positive value of $z_{ik}$ indicates that node $i$ belongs to community $k$, and the magnitude of $z_{ik}$ determines how strongly. The parameter $\btheta_i$ represents the degree correction for node $i$ as in the degree-corrected SBM \citep{karrer2011stochastic}
allowing for degree heterogeneity and in particular ``hub" nodes, common in real-world networks. The scalar parameter  $\alpha>0$  controls the edge density of the entire graph.

One can obtain the classical SBM as a special case of OCCAM by further requiring each $\bz_i$ to have only one non-zero value and setting $\btheta_i=1$ for all $i\in[n]$.
Keeping only one non-zero value in each row of $\bZ$ but allowing the entries of $\btheta$ to take positive values, one can recover the degree-corrected SBM  \citep{karrer2011stochastic}.  
More generally in OCCAM, nodes can belong to multiple communities at the same time. Each row of $\bZ$  can have multiple or all the entries different from zero, indicating the communities to which the node belong.

Equation~\eqref{eq:OCCAM} implies that under OCCAM the probability matrix $\bP$  has a non-negative eigenbasis given by $\bV = \bTheta\bZ$. The following proposition shows the converse result, namely, that any matrix $\bP$ that admits a non-negative eigenbasis can be represented as in Equation~\eqref{eq:OCCAM}, which motivates the interpretation of the non-zero entries of a non-negative eigenbasis as indicators of  community memberships.

\begin{proposition} \label{prop:V-occam}
	Let $\bP\in\real^{n\times n}$ be a symmetric real matrix with $\rank(\bP)=K$. Suppose that there exists a full-rank nonnegative matrix $\bV\in\real^{n\times K}$ and a matrix $\bU$ such that $\bP = \bV\bU^\top$. Then, there exists a non-negative diagonal matrix $\bTheta\in\real^{n\times n}$, a non-negative matrix $\bZ\in\real^{n\times K}$ with $\sum_{k=1}^K\bZ_{ik}^2=1$ for each $i\in[n]$, and a symmetric matrix $\bB\in\real^{K\times K}$ such that
	$$\bP=\bTheta\bZ\bB\bZ^\top \bTheta.$$
	Moreover, if $\bV$ satisfies the conditions of Proposition \ref{prop:identifiabilitySPCA}, then $\supp(\bV)=\supp(\bZ\bQ)$ for some permutation $\bQ\in\real^{K\times K}, \bQ^\top\bQ=\bI$.
\end{proposition}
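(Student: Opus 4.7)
The proof has two parts: constructing the OCCAM decomposition, then matching the supports of $\bV$ and $\bZ$. For the first part I will rewrite $\bP$ as $\bV\bB\bV^\top$ for a symmetric $\bB$, and then normalize the rows of $\bV$ to unit norm while absorbing the norms into a diagonal matrix $\bTheta$.

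\textbf{Step 1 (symmetric representation).} Because $\bP=\bV\bU^\top$ has rank $K$ and $\bV$ has full column rank, the column space of $\bP$ equals the column space of $\bV$; by the symmetry of $\bP$, the row space of $\bP$ equals this space as well. Let $\bV^+=(\bV^\top\bV)^{-1}\bV^\top$ and define $\bB=\bV^+\bP(\bV^+)^\top$. Then $\bV\bB\bV^\top=(\bV\bV^+)\bP(\bV\bV^+)^\top=\bP$, since $\bV\bV^+$ is the orthogonal projection onto the column space of $\bP$ and hence fixes $\bP$ on either side. Symmetry of $\bP$ forces $\bB=\bB^\top$.

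\textbf{Step 2 (row normalization).} Let $\btheta_i=\|\bV_i\|_2$ be the $\ell_2$-norm of the $i$-th row of $\bV$, set $\bTheta=\diag(\btheta)$, and define $\bZ$ by $\bZ_i=\bV_i/\btheta_i$ on rows where $\btheta_i>0$, choosing any fixed non-negative unit vector for the remaining rows. Then $\bV=\bTheta\bZ$ (the choice on zero rows is immaterial because $\btheta_i=0$ kills that row), and $\bZ$ is non-negative with unit-norm rows by construction. Substituting into $\bP=\bV\bB\bV^\top$ yields $\bP=\bTheta\bZ\bB\bZ^\top\bTheta$, the desired OCCAM form.

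\textbf{Step 3 (support identification).} Since $\bZ$ is obtained by scaling each non-zero row of $\bV$ by a positive constant, $\supp(\bZ_i)=\supp(\bV_i)$ for every row with $\btheta_i>0$, so the identity permutation $\bQ=\bI$ immediately gives $\supp(\bV)=\supp(\bZ\bQ)$ on such rows. If $\bV$ satisfies the pure-rows condition of Proposition \ref{prop:identifiabilitySPCA}, the witness rows $i_k$ are automatically non-zero (since $\bV_{i_k k}>0$ implies $\btheta_{i_k}>0$), and the same row-wise scaling argument finishes the claim. The only genuine obstacle is the minor technical nuisance of rows with $\btheta_i=0$ (isolated nodes), where the unit-norm constraint on $\bZ_i$ forces $\supp(\bZ_i)$ to be non-empty while $\supp(\bV_i)$ is empty; this is handled by restricting attention to non-isolated nodes or by interpreting the support equivalence modulo such rows, a standard convention in community detection.
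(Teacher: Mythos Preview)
Your proof is correct and follows essentially the same route as the paper: row-normalize $\bV$ to obtain $\bTheta$ and $\bZ$, define $\bB$ through the pseudoinverse of $\bV$, and verify $\bP=\bV\bB\bV^\top$ via the projection onto the column space of $\bV$. Your symmetry argument for $\bB$ and your direct support identification with $\bQ=\bI$ are slightly cleaner than the paper's (which unnecessarily invokes Proposition~\ref{prop:identifiabilitySPCA} for the last step), and you handle the zero-row edge case more carefully than the paper does.
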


In short, Proposition \ref{prop:V-occam} states a non-negative basis of the probability matrix $\bP$ can be mapped to overlapping communities as in Equation~\eqref{eq:OCCAM}. Moreover, under the conditions on this eigenbasis stated in Proposition~\ref{prop:identifiabilitySPCA}, the community memberships can be uniquely identified. These conditions are weaker than the ones in \cite{Zhang2020} since we are only interested in community memberships and not in identifiability of the other parameters; note that we do not aim to fit the OCCAM model, which is computationally much more intensive than our approach here. Other conditions for identifiability of overlapping community memberships have been presented in the literature \citep{huang2019detecting}, but the pure row assumption in Proposition~\ref{prop:identifiabilitySPCA} is enough for our purpose of estimating sparse memberships.

\section{Community detection via sparse iterative thresholding}

Our goal is to compute an appropriate sparse basis of the principal subspace of $\bA$ which contains  information about the overlapping community  memberships. Spectral clustering has been popular for community detection, typically clustering the rows of the leading eigenvectors of $\bA$ or a function of them to assign nodes to communities.  Spectral clustering with overlapping communities typically gives a continuous membership matrix, which can then be thresholded to obtain sparse membership vectors; however, this two-stage approach is unlikely to be optimal in any sense, and some of the overlapping clustering procedures can be computationally expensive \citep{Zhang2020,jin2017estimating}. In contrast, our approach of directly computing a sparse basis of the principal subspace of $\bA$ avoids the two-stage procedure and thus can lead to improvements in both accuracy and computational efficiency.

Sparse principal component analysis (SPCA)  \citep{jolliffe2003modified,zou2006sparse} seeks to estimate  the principal subspace of a matrix while incorporating sparsity constraints or regularization on the basis vectors. In high dimensions, enforcing sparsity can improve estimation when the sample size is relatively small, and/or simplify the interpretation of the solutions.  Many SPCA algorithms have been proposed to estimate eigenvectors of a matrix under sparsity assumptions (see for example \cite{amini2008high,johnstone2009consistency,vu2013minimax,ma2013sparse}).

Our goal is clearly related to SPCA since we are interested in estimating a sparse basis of the principal subspace of $\bP$, but an important difference is that our vectors of interest are not necessarily  orthogonal; in fact orthogonality is only achieved when estimated communities do not overlap, and is thus not compatible with meaningful overlapping community estimation.  
For non-overlapping community detection, however, there is a close connection between a convex relaxation of the maximum likelihood estimator of communities and a convex formulation of SPCA \citep{amini2018semidefinite}.

 Orthogonal iteration is a classical method for estimating the eigenvectors of a matrix; see for example \cite{golub2012matrix}. \cite{ma2013sparse} extended this method  to estimate sparse eigenvectors by an iterative thresholding algorithm.  Starting from an initial matrix $\bV^{(0)}\in\mathbb{R}^{n\times K}$, the general form of their algorithm iterates the following steps until convergence:
\begin{enumerate}
	\item Multiplication step:
	\begin{equation}
	\bT^{(t+1)} = \bA \bV^{(t)}. \label{eq:general-spca-multiplication}
	\end{equation}	
	\item Regularization step:
	\begin{equation}
	\bU^{(t+1)} = \mathcal{R}(\bT^{(t)}, \bLambda), \label{eq:general-spca-regularization}
	\end{equation}		
	where $\mathcal{R}:\mathbb{R}^{n\times K}\rightarrow\mathbb{R}^{n\times K}$ is a regularization function and $\bLambda\in\real^{n\times K}$ a matrix of regularization parameters.
	\item Identifiability  step:	
	\begin{equation}
	\bV^{(t+1)} = \bU^{(t+1)}\bW^{(t+1)}, \label{eq:general-spca-numericalstab}
	\end{equation}
	where $\bW^{(t+1)}$ is a $K\times K$ matrix.
\end{enumerate}
An example of a convergence criterion may be stopping when the distance between subspaces generated by $\bV^{(t)}$ and $\bV^{(t+1)}$ is small. For two full-rank matrices $\bU,\tilde{\bU}\in\Bbb{R}^{n\times K}$, the distance between the subspaces generated by the columns of $\bU$ and $\tilde{\bU}$ is defined through their orthogonal projection matrices $\bR=\bU(\bU^\top\bU)^{-1}\bU^\top$ and $\tilde{\bR} = \tilde{\bU}(\tilde{\bU}^\top\tilde{\bU})^{-1}\tilde{\bU}^\top$ as
\begin{equation*}
\operatorname{dist}(\bU, \tilde{\bU}) = \|\bR - \tilde{\bR}
\|,
\end{equation*}
where $\|\cdot \|$ is the matrix spectral norm   (see \cite{golub2012matrix}, Section 2.5.3).

Let $\widehat{\bV}$ be the value of $\bV^{(t)}$ at convergence, and let $\widetilde{\bV}$ be the $n\times K$ matrix of the $K$ leading eigenvectors of $\bA$. The algorithm provides a generic framework for obtaining a basis $\widehat{\bV}$ that is close to $\widetilde{\bV}$, and the regularization step can be customized to enforce some structure in $\widehat{\bV}$.   In each iteration, the multiplication step \eqref{eq:general-spca-multiplication} reduces the distance between the subspaces generated by $\bV^{(t)}$ and  $\widetilde{\bV}$ (Theorem 7.3.1 of \cite{golub2012matrix}), and then the regularization step~\eqref{eq:general-spca-regularization} forces some structure in $\bV^{(t)}$. \cite{ma2013sparse} focused on sparsity and regularized with a thresholding function satisfying $|[\mathcal{R}(\bT,\bLambda)]_{ik} - \bT_{ik}|\leq \bLambda_{ik}$ and $[\mathcal{R}(\bT,\bLambda)]_{ik}\mathbbm{1}(|\bT_{ik}|\leq \bLambda_{ik})=0$ for all $\bLambda_{ik}>0$ and $i\in[n],k\in[K]$, which includes both hard and soft thresholding. If the  distance between $\bU^{(t)}$ and $\bV^{(t)}$ is small, then the distance between $\bV^{(t)}$ and $\tilde{\bV}$ keeps decreasing until a certain tolerance is reached (Proposition 6.1 of \cite{ma2013sparse}).  Finally, the last step in Equation~\eqref{eq:general-spca-numericalstab} ensure identifiability. For example, the orthogonal iteration algorithm uses the QR decomposition $\bQ^{(t)}\bR^{(t)}=\bU^{(t)}$ and sets $\bV^{(t)}=\bU^{(t)}\bR^{(t)^{-1}}$, which is an orthogonal matrix.

We will use the general form of the algorithm presented in  Equations~\eqref{eq:general-spca-multiplication}-\eqref{eq:general-spca-numericalstab} to develop methods for estimating a sparse eigenbasis of $\bA$, by designing regularization and identifiability steps appropriate for overlapping community detection.

\subsection{Sparse eigenbasis  estimation}

We propose an iterative thresholding algorithm for sparse eigenbasis estimation when the basis vectors are not necessarily orthogonal. Let $\bV^{(t)}$ be the estimated basis at iteration $t$. For identifiability, we assume that this matrix has  normalized columns, that is, $\|\bV^{(t)}_{\cdot,k}\|_2 =1$ for each $k\in[K]$, where $\bV_{\cdot,k}$ denotes  the $k$-th column of $\bV$. Our algorithm is based on the following heuristic. Suppose that at some iteration $t$, $\bV^{(t)}$ is close to the basis of interest. The multiplication step in Equation~\eqref{eq:general-spca-multiplication} moves $\bV^{(t)}$ closer to  $\widetilde{\bV}$, the $K$-leading eigenspace of $\bA$, but the entries of $\bT^{(t+1)}=\bA\bV^{(t)}$ and $\bV^{(t)}$ are not necessarily close. Hence, before applying the regularization step, we introduce a linear transformation step that returns $\bT^{(t+1)}$ to a value that is close to $\bV^{(t)}$ entry-wise. This transformation is given by the solution of the  optimization problem
$$ \bGamma^{(t+1)} = \argmin_{\bGamma\in\real^{K\times K}} \|\bV^{(t)}\bGamma - \bT^{(t+1)}\|_F^2,$$ 
which has a closed form solution, $\bGamma^{(t+1)} = \left[ (\bV^{(t)})^\top \bV^{(t)}\right]^{-1}(\bV^{(t)})^\top\bT^{(t+1)}$. Define $$\widetilde{\bT}^{(t+1)} = \bT^{(t+1)} \left(\bGamma^{(t+1)}\right)^{-1}. $$
After this linear transformation, we apply a sparse regularization to $\widetilde{\bT}^{(t+1)}$,  defined by a thresholding function $\mathcal{S}$ with parameter $\lambda\in[0,1)$, 
\begin{equation}
\left[\mathcal{S}(\widetilde{\bT}, \lambda)\right]_{ik} = \left\{\begin{array}{cl}
\widetilde{\bT}_{ik} & \text{if }\widetilde{\bT}_{ik}> \lambda \max_{j}|\widetilde{\bT}_{ij}|,\\
0 & \text{otherwise}.
\end{array} \right. \label{eq:thresholdingfunction}
\end{equation}
The function $\mathcal{S}$ applies hard thresholding to each entry of the matrix $\widetilde{\bT}$ with a different threshold for each row, to adjust for possible differences in the expected degree of a node. 
The parameter $\lambda$ controls the level of sparsity, with larger values of $\lambda$ giving more zeros in the solution.  Finally, the new value of $\bV$ is obtained by normalizing the columns, setting $\bU^{(t+1)} = \mathcal{S}(\widetilde{\bT}^{(t+1)}, \lambda)$ and 
$$\bV^{(t+1)}_{i k} = \frac{1}{\|\bU^{(t+1)}_{\cdot, k}\|_2} \bU^{(t+1)}_{ik},$$
for each $i\in[n]$ and $k\in[K]$.

We stop the algorithm after the relative difference in spectral norm between $\bV^{(t)}$ and $\bV^{(t+1)}$ is smaller than some tolerance $\epsilon>0$, that is, 
\begin{equation*}
\frac{\|\bV^{(t+1)} - \bV^{(t)}\|}{\|\bV^{(t)}\|} < \epsilon.
\end{equation*}
These steps are summarized in Algorithm \ref{alg:spca-eig}.

\begin{algorithm}
	\caption{SPCA-eig: Sparse Eigenbasis Estimation}
	\begin{algorithmic}
		\Input Adjacency matrix $\bA$, regularization parameter $\lambda\in[0,1)$, initial estimator $\bV^{(0)}$ with normalized columns.
		\For{$t=1,\ldots$ until convergence}
		\begin{enumerate}
		    \item Update $\bT^{(t)} =\bA \bV^{(t-1)}$.
		    \item Update $\widetilde{\bT}^{(t)} = \bT^{(t)}[(\bV^{(t-1)})^\top \bT^{(t)}]^{-1}[ (\bV^{(t-1)})^\top\bV^{(t-1)}]$.
		    \item Thresholding: $\bU^{(t)} = \mathcal{S}(\widetilde{\bT}^{(t)}, \lambda)$.
		    \item Normalization: $\bV_{\cdot, k}^{(t)} = 
		    \frac{1}{\|\bU^{(t)}_{\cdot,k}\|_2}\bU^{(t)}_{\cdot,k}$, for $k\in[K]$.
		\end{enumerate}
		\EndFor\\
		\Return $\hat{\bV}_\lambda=\bV^{(t)}$ the value at convergence.
	\end{algorithmic}
	\label{alg:spca-eig}
\end{algorithm}

The following proposition shows that when Algorithm~\ref{alg:spca-eig} is applied to the expected probability matrix $\bP$ that has a sparse basis, then there exists a fixed point that has the correct support. In particular, this implies that for the expected probability matrix of an OCCAM graph defined in Equation~\eqref{eq:OCCAM},
the entries of this fixed point coincide with the support of the overlapping memberships of the model. The proof is given on the Appendix.
\begin{proposition} \label{prop:fixed-point1}
Let $\bP\in\real^{n\times n}$ be a symmetric matrix
with $\rank(\bP)=K<n$, and suppose that there exists a non-negative sparse basis $\bV$ of the principal subspace of $\bP$.
Let $\widetilde{\bV}$ be a matrix such that $\widetilde{\bV}_{\cdot,k}=\frac{1}{\|\bV_{\cdot,k}\|_2}\bV_{\cdot,k}$ for each $k\in[K]$, and  $v^\ast= \min\{\bV_{ik}/\|\bV_{i\cdot}\|_\infty|(i,k)\in\supp(\bV)\}$. Then, for any $\lambda\in[0,v^\ast)$, the matrix $\widetilde{\bV}$ is a fixed point of Algorithm~\ref{alg:spca-eig} applied to $\bP$.

\end{proposition}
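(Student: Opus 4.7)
The plan is to verify directly that running one iteration of Algorithm \ref{alg:spca-eig} on input $\bP$ starting from $\bV^{(0)}=\widetilde{\bV}$ returns $\widetilde{\bV}$. I would therefore trace through the four steps, arguing each one either leaves $\widetilde{\bV}$ unchanged or undoes the distortion introduced by the previous step.

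The first observation is that since $\bV$ is a basis of the column space of $\bP$ and $\widetilde{\bV}$ differs from $\bV$ only by right-multiplication by an invertible diagonal matrix $\bD = \mathrm{diag}(1/\|\bV_{\cdot,1}\|_2,\ldots,1/\|\bV_{\cdot,K}\|_2)$, the matrix $\widetilde{\bV}$ is itself a full-rank basis of the column space of $\bP$. Hence there exists an invertible $K\times K$ matrix $\bC$ with $\bP\widetilde{\bV} = \widetilde{\bV}\bC$. Plugging this into step 2 of the algorithm gives
\begin{equation*}
\widetilde{\bT} \;=\; \widetilde{\bV}\bC\bigl[\widetilde{\bV}^\top\widetilde{\bV}\bC\bigr]^{-1}\bigl[\widetilde{\bV}^\top\widetilde{\bV}\bigr] \;=\; \widetilde{\bV}\bC\bC^{-1}\bigl[\widetilde{\bV}^\top\widetilde{\bV}\bigr]^{-1}\bigl[\widetilde{\bV}^\top\widetilde{\bV}\bigr] \;=\; \widetilde{\bV},
\end{equation*}
so the combined multiplication and linear-transformation steps return $\widetilde{\bV}$ exactly. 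This is the cleanest part of the argument and is really the reason the particular choice of $\bGamma$ in the algorithm design works.

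The next step is to show that the thresholding $\mathcal{S}(\widetilde{\bV},\lambda)$ preserves the support of $\widetilde{\bV}$. For $(i,k) \notin \supp(\widetilde{\bV})=\supp(\bV)$ the entry is already zero and is trivially killed; the content is in showing that every $(i,k) \in \supp(\bV)$ survives, i.e.\ $\widetilde{\bV}_{ik} > \lambda \max_j \widetilde{\bV}_{ij}$. This is where the assumption $\lambda < v^\ast$ enters: one unwinds the column normalization, relates the ratio $\widetilde{\bV}_{ik}/\max_j \widetilde{\bV}_{ij}$ to $\bV_{ik}/\|\bV_{i\cdot}\|_\infty$ (using that the denominator in each row is bounded above by the maximum column-scaling factor times $\|\bV_{i\cdot}\|_\infty$), and concludes that the worst-case surviving ratio is bounded below by $v^\ast$. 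This step is the main obstacle: the definition of $v^\ast$ is stated in terms of $\bV$, whereas the thresholding operates on the column-normalized $\widetilde{\bV}$, so the estimate must carefully absorb the scaling constants from $\bD$.

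Once support preservation is established, the nonzero entries of $\mathcal{S}(\widetilde{\bV},\lambda)$ are exactly those of $\widetilde{\bV}$, so $\bU^{(t)} = \widetilde{\bV}$. The final column-normalization step in the algorithm then leaves $\widetilde{\bV}$ unchanged because $\|\widetilde{\bV}_{\cdot,k}\|_2 = 1$ by construction. Combining the three stages yields $\bV^{(t)} = \widetilde{\bV}$, which is exactly the fixed-point assertion. No iteration beyond the first is needed, and sparsity of $\bV$ is only used in guaranteeing $v^\ast > 0$, so that a nontrivial range of $\lambda$ is available.
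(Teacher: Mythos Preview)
Your approach is essentially identical to the paper's: set $\bV^{(0)}=\widetilde{\bV}$, verify that the multiplication plus linear-transformation steps collapse back to $\widetilde{\bV}$ (the paper writes $\bP=\widetilde{\bV}\bD\bU^\top$ explicitly rather than abstracting to $\bP\widetilde{\bV}=\widetilde{\bV}\bC$, but the algebra is the same), then check that thresholding and column normalization leave $\widetilde{\bV}$ unchanged. One remark: the column-scaling subtlety you flag at the thresholding step---that $v^\ast$ is defined via $\bV$ while the threshold compares entries of $\widetilde{\bV}$---is glossed over entirely in the paper's proof, which simply asserts $\lambda\max_j|\widetilde{\bV}_{ij}|<\widetilde{\bV}_{ik}$ from $\lambda<v^\ast$; so your caution there is well placed, and neither argument is fully airtight on that point as stated.
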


When the algorithm is applied to an adjacency matrix $\bA$, the matrix $\bV$ is not exactly a fixed point, but the norm of the difference between $\bV$ and $\bV^{(1)}$ will be a function of the distance between the principal subspaces of $\bA$ and $\bP$. Concentration results \citep{le2017concentration} ensure that $\bA$ is close to its expected value $\bP$, specifically, 
 $\|\bA-\bP\|=O(\sqrt{d})$ (where $\|\cdot\|$ is the spectral norm of a matrix) with high probability as long as the largest expected degree $d=\max_{i\in[n]}\bP_{ij}$ satisfies $d=\Omega(\log n)$. If the $K$ leading eigenvalues of $\bV$ are sufficiently large, then the principal subspaces of $\bA$ and $\bP$  are close to each other \citep{yu2015useful},

\subsubsection{Community detection in networks with homogeneous degrees}

Here, we present a second algorithm for the estimation of sparse community memberships in graphs for homogeneous expected degree of nodes within a community. Specifically, we focus on graphs for which the expected adjacency matrix $\bP=\mathbb{E}[\bA]$ has the form
\begin{equation}
    \bP = \bZ\bB\bZ^\top, \label{eq:occam-notheta}
\end{equation}
where $\bZ\in\real^{n\times K}$ is a membership matrix such that $\|\bZ_{i,\cdot}\|_1 = \sum_{k=1}^K|\bZ_{ik}|=1$, and $\bB\in\real^{K\times K}$ is a full-rank matrix. This model is a special case of OCCAM, when the degree heterogeneity parameter $\bTheta$ in  Equation~\eqref{eq:OCCAM} is constant for all vertices. In particular, this case includes the classic SBM \citep{holland1983stochastic} when the memberships do not overlap.

To enforce degree homogeneity, we add an additional normalization step, so that the matrix $\hat{\bZ}$ has rows with constant norm $\|\hat{\bZ}_{i,\cdot}\|_1=1$ as in Equation~\eqref{eq:occam-notheta}. In practice we observed that this normalization gives very accurate results in terms of community detection. After the multiplication step $\bT^{(t)} =\bA\bV^{(t-1)}$, the columns of $\bT^{(t)}$ are proportional to the norm of the columns $\bV^{(t-1)}$,  which is in turn proportional to the estimated community sizes.   In order to remove the effect of this scaling with community size, which is not meaningful for community detection, we normalize the columns of $\bV^{(t-1)}$, and then perform the thresholding and the row normalization step as before.   These steps are summarized  in Algorithm \ref{alg:spca-comdet}.

\begin{algorithm}
	\caption{SPCA-CD: Sparse eigenbasis estimation for networks with homogeneous degrees.} 
	\begin{algorithmic}
		\Input Adjacency matrix $\bA$, regularization parameter $\lambda\in[0,1)$, initial estimator $\bV^{(0)}$ with normalized rows.
		\For{$t=1,\ldots$ until convergence}
		    \begin{enumerate}
		        \item Update $\bT^{(t)} =\bA \bV^{(t-1)}$.
		        \item Column normalization: $\widetilde{\bT}_{\cdot, k}^{(t)} = 
		    \frac{1}{\|\bT^{(t)}_{\cdot,k}\|_1}\bT^{(t)}_{\cdot,j}$, for $k\in[K]$.
		    \item Thresholding: $\bU^{(t)} = \mathcal{S}(\widetilde{\bT}^{(t)}, \lambda)$.
		    \item Row normalization: $\bV^{(t)}_{i,\cdot} = \frac{1}{\|\bU^{(t)}_{i,\cdot}\|_1} \bU^{(t)}_{i,\cdot}$, for $i\in[n]$.
		    \end{enumerate}
		\EndFor\\
		\Return $\hat{\bV}_\lambda=\bV^{(t)}$ the value at convergence.
	\end{algorithmic}
	\label{alg:spca-comdet}
\end{algorithm}

The next theorem shows that in the case of the planted partition SBM, a matrix with the correct sparsity pattern is a fixed point of Algorithm \ref{alg:spca-comdet}. Note that since the algorithm does not assume that each node belongs to a single community, this result not only guarantees that there exist a fixed point that correctly cluster the nodes into communities, as typical goal of of community detection, but also that is able to distinguish if a node belongs to more than one community or not. The proof is given on the appendix.

\begin{theorem}\label{theo-sparsistency}
	Let $\bA$ be a network generated from a SBM with $K$ communities of sizes $n_1,\ldots,n_K$, membership matrix $\bZ$  and connectivity matrix $\bB\in[0,1]^{K\times K}$ of the form
	\begin{equation*}
	\bB_{rs} = \left\{\begin{array}{cl}
	p, & \text{ if $r=s$},\\
	q, & \text{if $r\neq s$},
	\end{array}\right.
	\end{equation*}
	for some $p,q\in[0,1]$, $p>q$. Suppose that for some $\lambda^\ast\in (0,1)$ and some $c_1>2$,
	\begin{equation}
	\lambda^\ast p- q>c_1\sqrt{\frac{\log(Kn)}{\min_{k} n_k}}, \label{theo-condseparation}
	\end{equation}
	Then,  for any $\lambda\in(\lambda^\ast,1)$,  $\bZ$ is a fixed point of Algorithm \ref{alg:spca-comdet} with probability at least $1-n^{c_1-1}$.
\end{theorem}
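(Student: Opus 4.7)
The plan is to verify by direct computation that initializing Algorithm \ref{alg:spca-comdet} at $\bV^{(0)}=\bZ$ reproduces $\bZ$ after a single iteration, with high probability over the draw of $\bA$. Since each row of $\bZ$ is an indicator $e_r$ for the community $r$ to which the corresponding node belongs, and since $\lambda<1$ forces any row of $\mathcal{S}(\widetilde{\bT}^{(1)},\lambda)$ to retain at least its largest positive entry, the output $\bV^{(1)}$ will equal $\bZ$ if and only if for every node $i$ in community $r$ and every other community $k\neq r$,
$$\widetilde{\bT}^{(1)}_{ik}\;\leq\; \lambda\,\widetilde{\bT}^{(1)}_{ir},$$
where $\widetilde{\bT}^{(1)}$ is the column-normalized version of $\bT^{(1)}=\bA\bZ$. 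The whole theorem thus reduces to this family of inequalities; the final row-normalization step becomes automatic because it simply rescales the unique surviving entry to $1$.

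I would then unpack both sides explicitly, using $\bT^{(1)}_{ik}=\sum_{j\in C_k}\bA_{ij}$ and $\|\bT^{(1)}_{\cdot,k}\|_1=\sum_{j\in C_k}d_j$, each a sum of independent Bernoullis under the planted-partition SBM. The expectations are $\e[\bT^{(1)}_{ir}]=(n_r-1)p$, $\e[\bT^{(1)}_{ik}]=n_kq$ for $k\neq r$, and $\e[\|\bT^{(1)}_{\cdot,k}\|_1]=n_k\bigl[(n_k-1)p+(n-n_k)q\bigr]$. Substituting gives a noiseless ratio equal to $(q/p)\cdot[(n_r-1)p+(n-n_r)q]/[(n_k-1)p+(n-n_k)q]$, which is close to $q/p$ whenever community sizes are comparable. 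Thus the natural noiseless separator is $\lambda>q/p$, i.e.\ $\lambda p-q>0$, and the remaining work is to control the stochastic deviations around this limit.

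The technical core is a concentration argument at two different scales. For each fixed row $i$, Bernstein's inequality applied to $\bT^{(1)}_{ik}$ (a sum of at most $n_k$ independent Bernoullis with parameter $p$ or $q$) yields fluctuations of order $\sqrt{n_k\log(Kn)}$; for each column $k$, $\|\bT^{(1)}_{\cdot,k}\|_1$ is a linear function of $O(nn_k)$ independent edge indicators and concentrates at a strictly faster relative rate. A union bound over the $O(nK)$ relevant events produces the factor $\log(Kn)$ and exceptional probability of order $n^{-(c_1-1)}$. Plugging these bounds into the ratio and invoking the separation hypothesis (\ref{theo-condseparation}) should then show that the perturbed ratio remains strictly below $\lambda$ for every $\lambda\in(\lambda^\ast,1)$, yielding the claimed high-probability event.

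The main obstacle, as I see it, lies in the bookkeeping: the quantity under study mixes a row-wise stochastic object with a column-wise normalization, and these scale differently in $n_k$. In particular, one must verify that the multiplicative error from the column-norm ratio $\|\bT^{(1)}_{\cdot,k}\|_1/\|\bT^{(1)}_{\cdot,r}\|_1$ together with the deterministic correction $[(n_r-1)p+(n-n_r)q]/[(n_k-1)p+(n-n_k)q]$ from unequal community sizes does not eat into the gap $\lambda^\ast p-q$. Because the column-norm ratio concentrates at relative rate $O(1/\sqrt{nn_{\min}})$, strictly below the row-wise rate $O(\sqrt{\log(Kn)/n_{\min}})$ appearing in (\ref{theo-condseparation}), the argument should close; matching the constants to the stated probability with $c_1>2$ then comes down to being precise about the constants absorbed into Bernstein's inequality and the union bound.
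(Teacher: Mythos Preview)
Your plan is correct and follows the same one-step fixed-point structure as the paper, but there is one genuine difference worth noting. You faithfully track the column normalization of Algorithm~\ref{alg:spca-comdet}, dividing column $k$ of $\bT^{(1)}$ by the \emph{random} quantity $\|\bT^{(1)}_{\cdot,k}\|_1=\sum_{j\in C_k}d_j$, and then argue separately that this column sum concentrates at a faster relative rate than the row-wise fluctuations. The paper's proof instead silently replaces this normalization by the \emph{deterministic} community size $n_k=\|\bZ_{\cdot,k}\|_1$, writing $\widetilde{\bT}_{ik}=\frac{1}{n_k}\sum_{j\in C_k}\bA_{ij}$ and applying Hoeffding's inequality directly to the difference $\widetilde{\bT}_{ij}-\lambda\widetilde{\bT}_{ik_i}$, which is then a linear combination of independent Bernoullis with no ratio to control. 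This shortcut makes the paper's argument shorter and avoids the two-scale bookkeeping you identify as the main obstacle; your version is more faithful to the algorithm as written, and your observation that the column-norm error is lower order is precisely what justifies the paper's simplification. The remaining difference---Bernstein versus Hoeffding---is cosmetic here; Hoeffding is enough for bounded variables and lines up more directly with the constant $c_1$ and the probability stated in the theorem.
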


\subsection{Selecting the thresholding parameter \label{sec:tuningparameters}}

Our algorithms require two user-supplied parameters: the number of communities $K$ and the threshold level $\lambda$.    The parameter $\lambda$ controls the sparsity of the estimated basis $\hat{\bV}$. In practice, looking at the full path of solutions for different values of $\lambda$ may be informative, as controlling the number of overlapping memberships  can result in different community assignments. On the other hand, it is practically useful to select an appropriate value $\lambda$ that provides a good fit to the data. We discuss two possible techniques for choosing this parameter, the Bayesian Information Criterion (BIC) and edge cross-validation (ECV) \citep{li2016network}. Here we assume that the number of communities $K$ is given, but choosing the number of communities is also an important problem, with multiple methods available  for solving it \citep{wang2015likelihood,le2015estimating,li2016network}. If computational resources allow, $K$ can be chosen by  cross-validation along with $\lambda$.

The goodness-of-fit can be measured via the likelihood of the model for the graph $\bA$, which depends on the probability matrix $\bP=\Bbb{E}[\bA]$. Given $\widehat{\bV}$, a natural estimator for $\bP$ is the projection of $\bA$ onto the subspace spanned by $\widehat{\bV}$, which can be formulated as

\begin{align}
\label{eq:projection-into-V}
\widehat{\bP} =  \underset{\bP}{\argmin} &
 \|\bA - \bP\|_F^2 \nonumber \\
 \text{subject to } &  \bP= \widehat{\bV}\bB \widehat{\bV}^\top, \; \bB\in\real^{K\times K}.
\end{align}
This optimization problem finds the least squares estimator of a matrix constrained to the set of symmetric matrices with a principal subspace defined by $\widehat{\bV}$, and has a closed-form solution, stated in the following proposition.   
\begin{proposition}\label{prop:estimator}
Let $\widehat{\bP}$ be the solution of the optimization problem~\eqref{eq:projection-into-V}, and suppose that $\widehat{Q}\in\real^{n\times K}$ is a matrix with orthonormal columns such that $\widehat{\bV}=\widehat{\bQ}\widehat{\bR}$ for some matrix $\widehat{\bR}\in\real^{K\times K}$. Then, 
$$\widehat{\bP} = \widehat{\bQ}\left(\widehat{\bQ}^\top\bA\widehat{\bQ}\right)\widehat{\bQ}^\top.$$
\end{proposition}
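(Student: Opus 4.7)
The plan is to reduce the constrained least-squares problem to an unconstrained one on $\bB$ (or rather on a reparametrization of it), and then exploit the orthogonality of $\widehat{\bQ}$ together with the unitary invariance of the Frobenius norm.

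First, I would substitute the QR factorization $\widehat{\bV}=\widehat{\bQ}\widehat{\bR}$ into the constraint, so that $\bP=\widehat{\bQ}(\widehat{\bR}\bB\widehat{\bR}^\top)\widehat{\bQ}^\top$. Because $\widehat{\bV}\in\real^{n\times K}$ has full column rank $K$, the upper-triangular factor $\widehat{\bR}\in\real^{K\times K}$ is invertible, so the map $\bB\mapsto \bC:=\widehat{\bR}\bB\widehat{\bR}^\top$ is a bijection on $\real^{K\times K}$. Therefore the feasible set is exactly $\{\widehat{\bQ}\bC\widehat{\bQ}^\top : \bC\in\real^{K\times K}\}$, and solving \eqref{eq:projection-into-V} is equivalent to minimizing $\|\bA-\widehat{\bQ}\bC\widehat{\bQ}^\top\|_F^2$ over $\bC$.

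Next, I would complete $\widehat{\bQ}$ to an $n\times n$ orthogonal matrix $[\widehat{\bQ}\,|\,\widehat{\bQ}_{\perp}]$, where $\widehat{\bQ}_{\perp}\in\real^{n\times(n-K)}$ has orthonormal columns spanning the orthogonal complement. Using the orthogonal invariance of $\|\cdot\|_F$, I would conjugate the residual by this orthogonal matrix and expand the resulting $2\times 2$ block matrix. Only the top-left block depends on $\bC$, giving
\begin{equation*}
\|\bA-\widehat{\bQ}\bC\widehat{\bQ}^\top\|_F^2 = \|\widehat{\bQ}^\top\bA\widehat{\bQ}-\bC\|_F^2 + 2\|\widehat{\bQ}_\perp^\top\bA\widehat{\bQ}\|_F^2 + \|\widehat{\bQ}_\perp^\top\bA\widehat{\bQ}_\perp\|_F^2,
\end{equation*}
where the last three terms are constants in $\bC$. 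The minimizer is therefore $\widehat{\bC}=\widehat{\bQ}^\top\bA\widehat{\bQ}$, and substituting back yields $\widehat{\bP}=\widehat{\bQ}(\widehat{\bQ}^\top\bA\widehat{\bQ})\widehat{\bQ}^\top$.

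There is essentially no hard step here; the only points that deserve a line of justification are the invertibility of $\widehat{\bR}$ (which follows from $\widehat{\bV}$ having full column rank, as is implicit in the algorithm, since columns of $\bV^{(t)}$ are normalized and linearly independent at the point of interest) and the fact that the chosen $\widehat{\bC}=\widehat{\bQ}^\top\bA\widehat{\bQ}$ is realized by some $\bB\in\real^{K\times K}$, which holds because $\bB=\widehat{\bR}^{-1}\widehat{\bC}\widehat{\bR}^{-\top}$ is well-defined. The statement does not require $\bB$ to be symmetric, but since $\bA$ is symmetric the optimal $\widehat{\bC}$ is automatically symmetric, and hence so is $\widehat{\bP}$.
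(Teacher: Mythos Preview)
Your proof is correct and reaches the same conclusion, but the route differs from the paper's. The paper works directly in the $\bB$-parametrization: it expands $\|\bA-\widehat{\bV}\bB\widehat{\bV}^\top\|_F^2$ as a trace, completes the square to read off the minimizer $\widehat{\bB}=(\widehat{\bV}^\top\widehat{\bV})^{-1}\widehat{\bV}^\top\bA\widehat{\bV}(\widehat{\bV}^\top\widehat{\bV})^{-1}$, and only then substitutes $\widehat{\bV}=\widehat{\bQ}\widehat{\bR}$ to collapse $\widehat{\bV}(\widehat{\bV}^\top\widehat{\bV})^{-1}\widehat{\bV}^\top$ into $\widehat{\bQ}\widehat{\bQ}^\top$. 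You instead substitute the QR factorization first, reparametrize by $\bC=\widehat{\bR}\bB\widehat{\bR}^\top$, and then minimize via the block decomposition induced by the orthogonal completion $[\widehat{\bQ}\mid\widehat{\bQ}_\perp]$ and the unitary invariance of $\|\cdot\|_F$. Your argument is a bit more geometric and avoids the trace algebra and the explicit normal-equations form; the paper's version has the minor advantage of displaying $\widehat{\bP}$ as the two-sided projection $\widehat{\bV}(\widehat{\bV}^\top\widehat{\bV})^{-1}\widehat{\bV}^\top\bA\widehat{\bV}(\widehat{\bV}^\top\widehat{\bV})^{-1}\widehat{\bV}^\top$ before simplification, which makes the dependence on $\widehat{\bV}$ alone transparent. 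One small slip: after your displayed identity there are three terms on the right, and only the last \emph{two} are constant in $\bC$; the phrase ``last three terms'' should be ``last two terms''.
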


The Bayesian Information Criterion (BIC) \citep{schwarz1978estimating} provides a general way of choosing a tuning parameter by balancing the fit of the model measured with the log-likelihood of $\bA$, and a penalty for the complexity of a model that is proportional to the number of parameters. The number of non-zeros in $\bV$ given by $\|\bV \|_0$ can be used as a proxy for the degrees of freedom, and the sample size is taken to be the number of independent edges in $\bA$.   Then the BIC for a given $\lambda$ can be written as
\begin{equation}
\label{eq:BIC_occam}
\mathcal{P}_{\operatorname{BIC}}(\lambda) =  -2\ell(\widehat{\bP}_\lambda) + \|\widehat{\bV}_\lambda\|_0\log(n(n-1)/2), 
\end{equation}
where $\widehat{\bP}_\lambda$ is the estimate for $\bP$ defined in Proposition~\ref{prop:estimator} for $\widehat{\bV}_\lambda$.

The BIC criterion~\eqref{eq:BIC_occam} has the advantage of being simple to calculate, but it has some issues. First, the BIC is derived for a maximum likelihood estimator, while $\widehat{\bP}$ is not obtained in this way, and this is only a heuristic.   Further, the least squares estimator $\widehat{\bP}$ is not guaranteed to result in a valid estimated edge probability (between 0 and 1). A possible practical solution is to modify the estimate by defining $\widetilde{\bP}\in[0,1]^{n\times n}$ as $\widetilde{\bP}_{ij} = \min(\max(\widehat{\bP}_{ij}, \epsilon), 1-\epsilon)$ for some small value of $\epsilon\in(0,1)$.

Another alternative for choosing the tuning parameter is edge cross-validation (CV). \cite{li2016network} introduced  a CV method for network data based on splitting the set of node pairs $\mathcal{N}=\{(i,j):i,j\in\{1,\ldots,N\}\}$ into $L$ folds. For each fold $l = 1, \dots, L$, the corresponding set of node pairs $\Omega_l\subset\mathcal{N}$ is excluded, and the rest are used to fit the basis $\bV$.
\cite{li2016network} propose to use a matrix completion  algorithm based on the rank $K$ truncated SVD to fill in the entries missing after excluding  $\Omega_l$, resulting in a matrix $\hat{M}_l\in\real^{n\times n}$. Then, for a given $\lambda$ we estimate  $\widehat{\bV}_\lambda$, and use Proposition~\ref{prop:estimator} to obtain an estimate $\widehat{\bP}_\lambda(\hat{M}^{(l)})$ of $\bP$.  The error on the held-out edge set is measured by 
\begin{equation}
\mathcal{P}_{\operatorname{CV}}(\bA, \widehat{\bP}_\lambda(\widehat{\bM}_l); \Omega_l) = \frac{1}{|\Omega_l|} \sum_{(i,j)\in\Omega_l} (\bA_{ij} - \hat{\bP}_\lambda(\hat{\bM}_l)_{ij} )^2, 
\end{equation}
and the tuning parameter $\lambda$ is selected to minimize the average cross-validation error $$\mathcal{P}_{\operatorname{CV}}(\lambda)=\frac{1}{L}\sum_{l=1}^L\mathcal{P}_{\operatorname{CV}}(\bA, \widehat{\bP}_\lambda(\widehat{\bM}_l); \Omega_l).$$

The edge CV method does not rely in a specific model for the graph, which can be convenient in the settings mentioned before, but its computational cost is larger. In practice, we observe that edge CV tends to select more complex models in which nodes are assigned to more communities than the solution selected with BIC (see Section~\ref{sec:tuningCVBIC}).

\section{Numerical evaluation on synthetic networks}
We start with evaluating our methods and comparing them to benchmarks on simulated networks.  In all scenarios, we generate networks from OCCAM, thus edges of $\bA$ are independent Bernoulli  random variables, with expectation given by~ \eqref{eq:OCCAM}. We assume that each row vector $\bz_i\in\real^K$ of $\bZ=[\bz_1, \ldots, \bz_n]^\top$ satisfies $\|\bz_i\|_1=1$, so each node has the same expected degree.   To better understand what affects the performance, we evaluate the methods by varying one parameter from the following list at a time;  all of them affect the difficulty of detecting overlapping communities.  
\begin{enumerate}[label=\alph*)]
	\item Fraction of nodes belonging to more than one community $\widetilde{p}$ (the higher $\widetilde p$, the more difficult the problem).  For a given $\widetilde p\in[0,1)$, we select $\widetilde{p}n$ nodes for the overlaps, and assign the rest to only one community, distributed equally among all the communities. For most of the experiments we use $K=3$ communities, and $1/4$ of the overlapping nodes are assigned to all communities with $\bz_i=[1/3, 1/3, 1/3]^T$, while the rest are assigned to two communities $j,k$, with $\bz_{ij}=\bz_{ik}=1/2$, equally distributing these nodes on all pairs $(j,k)$. When $K>3$, we only assign the nodes to one or two communities following the same process, but we do not include overlaps with more than three communities.
	\item Connectivity between communities $\rho$ (the higher $\rho$, the more difficult the problem).  We parameterize $\bB$ as
	\[\bB = (1-\rho) \bI_{K} + \rho \mathbf{1}_K\mathbf{1}_K^\top,\]
	and vary $\rho$ in a range of values between 0 and 1.
	\item Average degree of the network $d$ (the higher $d$, the easier the problem). For a given average degree $d$, we set $\alpha$ in~\eqref{eq:OCCAM} so that the expected average degree $\frac{1}{n}\mathbf{1}^\top_n\Bbb{E}[\bA] \mathbf{1}_n$ is equal to $d$.
	\item Node degree heterogeneity (the more heterogeneous the degrees, the harder the problem).   This is controlled by parameter $\btheta=\text{diag}(\bTheta)$, and in most simulations we set $\btheta_i=1 \ \forall i\in[n]$ so all nodes have the same degree, but in some scenarios we also introduce hub nodes by setting $\theta_i=5$ with probability $0.1$ and $\theta_i=1$ with probability 0.9.
	\item Number of communities $K$ (the larger $K$, the harder the problem).  For all values of $K$, we maintain communities of equal size. 
\end{enumerate}
In most scenarios, we fix $n= 500$, and $K=3$. All simulation settings are run 50 times, and the average result together with its 95\% confidence band are reported. An implementation of the method in R can be found at \url{https://github.com/jesusdaniel/spcaCD}.

Our main goal is to find the set of non-zero elements of the membership matrix. Many measures can be adopted to evaluate a solution;  here we use the \emph{normalized variation of information} (NVI) introduced by
\cite{lancichinetti2009detecting}, which is specifically designed for problems with overlapping clusters. Given a pair of binary random vectors $X, Y$ of length $K$, the normalized conditional entropy of $X$ with respect to $Y$ can be defined as
\begin{equation*}
H_{norm}(X|Y) = \frac{1}{K}\sum_{k=1}^K\frac{H(X_k|Y_k)}{H(X_k)},
\end{equation*}
where $H(X_k)$ is the entropy of $X_k$ and $H(X_k|Y_k)$ is the conditional entropy of $X_k$ given $Y_k$, defined as
\begin{eqnarray}
H(X_k) & = & -P(X_k=0)\log P(X_k=0) - P(X_k=1)\log P(X_k=1) \label{eq:entropyX}\\
H(X_k, Y_k) & = & -\sum_{a=0}^1\sum_{b=0}^1P(X_k=a, Y_k=b)\log P(X_k=a, Y_k=b) \label{eq:entropyXY}\\
H(X_k|Y_k) & = & H(X_k, Y_k) - H(Y_k),\nonumber
\end{eqnarray}
and the normalized variation of information between $X$ and $Y$ is defined as
\begin{equation}
N(X|Y) = 1- \min_\sigma\frac{1}{2}\left(H_{norm}(\sigma(X)|Y) + H_{norm}(Y|\sigma(X))\right),
\end{equation}
where $\sigma$ is a permutation of the indexes to account for the fact that the binary assignments can be equivalent up to a permutation.   The NVI is always a number between 0 and 1; it is equal to 0 when $X$ and $Y$ are independent, and to 1 if $X=Y$.

  For a given pair of binary membership matrices $\bZ$ and $\widetilde{\bZ}$ with binary entries indicating community memberships, we can use the rows of  replace the probabilities in equations \eqref{eq:entropyX} and \eqref{eq:entropyXY} with the sample versions using the rows of $\widetilde{\bZ}$ and $\bZ$, that is
$$\hat{P}(X_k=a) = \frac{1}{n}\sum_{i=1}^n\mathrm{1}\{ \widetilde\bZ_{ik}=a \},\quad\quad\quad\hat{P}(Y_k=b) = \frac{1}{n}\sum_{i=1}^n\mathrm{1}\{\bZ_{ik}=b\},$$
$$\hat{P}(X_k=a, Y_k=b) = \frac{1}{n}\sum_{i=1}^n\mathrm{1}\{\widetilde{\bZ}_{ik}=a, \bZ_{ik}=b\},$$
for $a,b\in\{0,1\}.$

\subsection{Choice of initial value}
We start from comparing several initialization strategies: 
\begin{itemize}
	\item An overlapping community assignment, from the method for fitting OCCAM. 
	\item A non-overlapping community assignment, from SCORE \citep{jin2015}, a spectral clustering method designed for networks with heterogeneous degrees.
	\item  Multiple random non-overlapping community assignments, with each node randomly assigned to only one community.   We use five different random values and take the solution corresponding the smallest error as measured by~\eqref{eq:projection-into-V}.
\end{itemize}

We compare these initialization schemes with fixed $n=500$, $K=3$, $d=50$, and varying between-community connectivity $\rho$ and the fraction of overlapping nodes $\widetilde p$. For both our methods (SPCA-eig and SPCA-CD), we fit solution paths over a range of values $\lambda=\{0.05, 0.1, \ldots,0.95\}$, and report the solution with the highest NVI for each of the methods (note that we are not selecting $\lambda$ in a data-driven way in order to reduce variation that is not related to initialization choices).

Figure \ref{fig:initializationSPCA} shows the results on initialization strategies. In general, all methods perform worse as the problem becomes harder, and the non-random initializations perform better overall;  the multiple random initializations are also sufficient for the easier case of few nodes in overlaps.  
For the rest of the paper, unless explicitly stated, we use the non-overlapping community detection solution (SCORE) to initialize the algorithm, given its good performance and low computational cost.

\begin{figure}
	\centering
	\includegraphics[width=\textwidth]{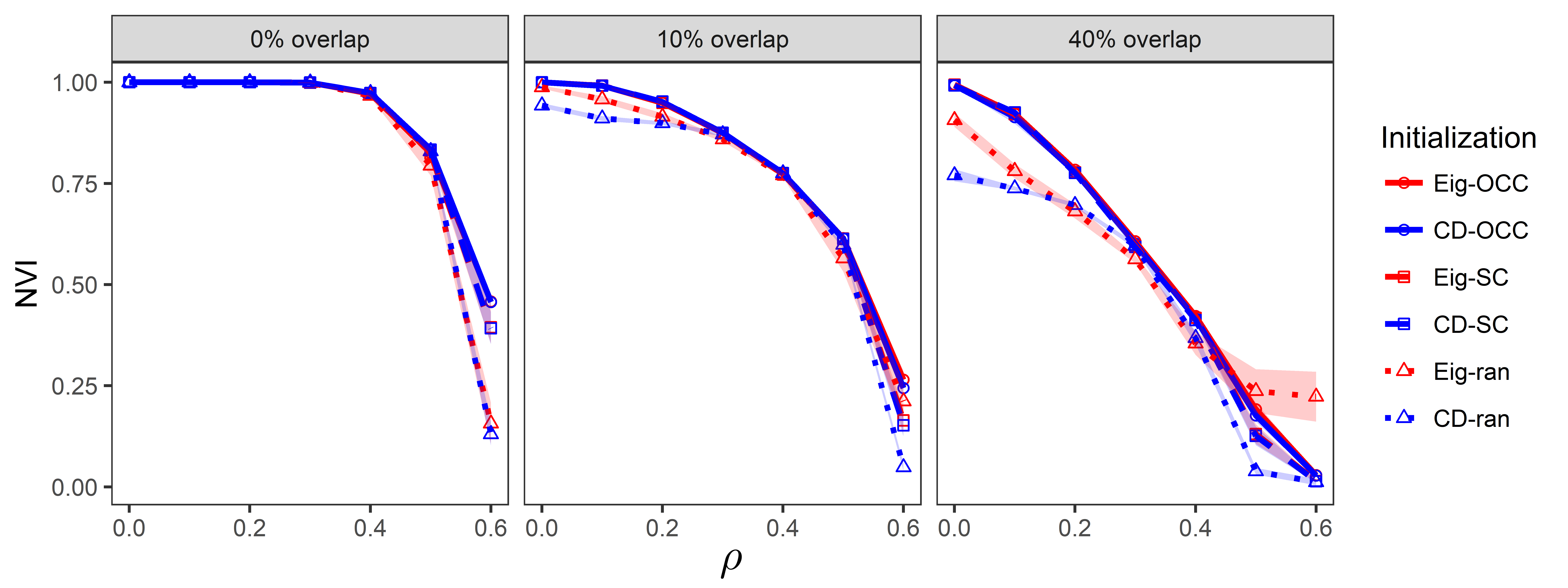}
	\caption{Performance measured by NVI under different initialization strategies (OCCAM, SCORE and five random initial values). The errors are plotted as a function of  connectivity between communities $\rho$ for three different values of the overlap $\widetilde p$.}
	\label{fig:initializationSPCA}
\end{figure}

\subsection{Choosing the threshold \label{sec:tuningCVBIC}}
The tuning parameter $\lambda$ controls the sparsity of the solution, and hence the fraction of pure nodes.   Since community detection is an unsupervised problem, it may be useful in practice to look at the entire path over $\lambda$ and consider multiple solutions with different levels of sparsity (see Section \ref{sec:karate}). However, we may also want to choose a single value of $\lambda$ that balances a good fit and a parsimonious solution. Here, we evaluate the performance of the two strategies for choosing $\lambda$ proposed in Section \ref{sec:tuningparameters}, BIC and CV, using the same simulation setting than the previous section.

Figure \ref{fig:tuningSPCA} shows the average performance measured by NVI of the two tuning methods.   The BIC tends to select sparser solutions than CV, and hence when the true membership matrix is sparse (few overlaps), BIC outperforms CV, but with more overlap in communities, CV usually performs better, specially for SPCA-CD. Since there is no clear winner overall, we use BIC in subsequent analysis, because it is computationally cheaper.

\begin{figure}
	\centering
	\includegraphics[width=\textwidth]{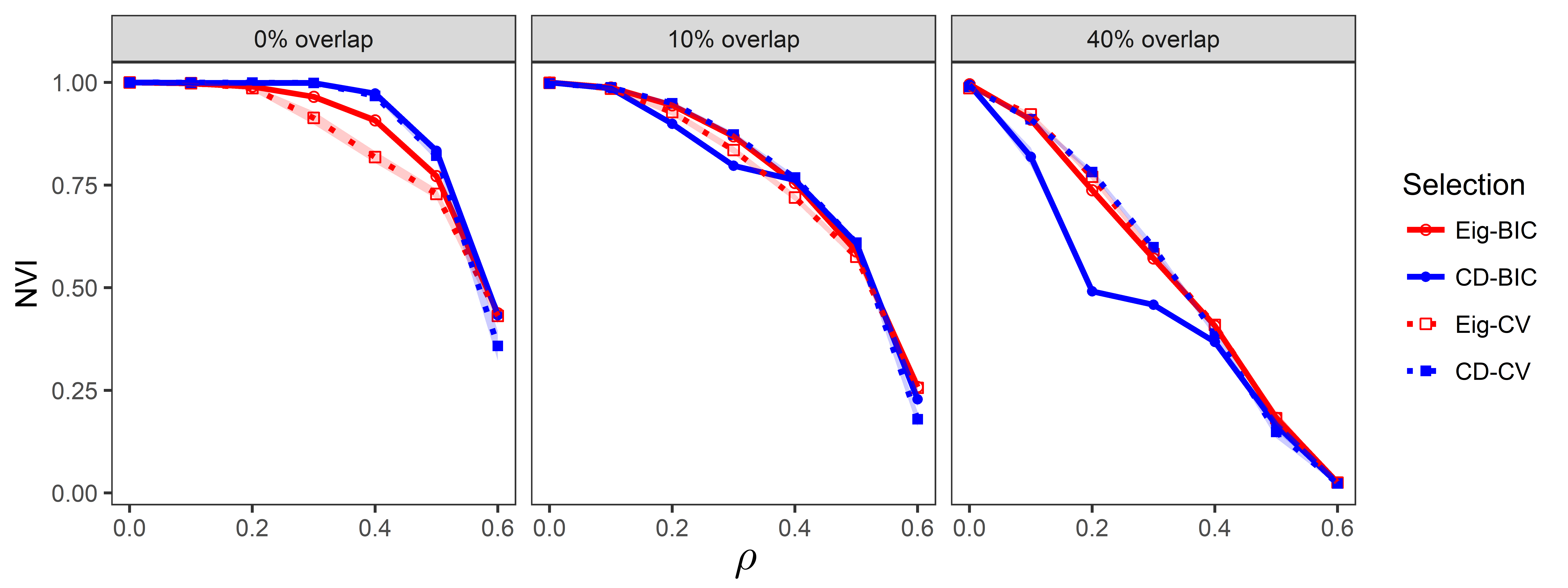}
	\caption{Performance of two tuning methods, BIC and CV, as measured by NVI.   The errors are plotted as a function of  connectivity between communities $\rho$ for three different values of the overlap $\widetilde p$. 
	}\label{fig:tuningSPCA}
\end{figure}

\subsection{Comparison with existing methods \label{sec:comparison_methods}}
We compare our proposal to several state-of-the-art methods for overlapping community detection. We use the same simulation settings as in the previous section ($n=500$ and $K=3$), including sparser scenarios with $d=20$, and networks with heterogeneous degrees ($d=50$ and 10\% of nodes are hubs). 

We select competitors based on good performance reported in previous studies.    As representative examples of spectral methods, we include OCCAM fitted by the algorithm  of \cite{Zhang2020} and Mixed-SCORE \citep{jin2017estimating}.   We also include the EM algorithm for the BKN model \citep{ball2011efficient} and the overlapping SBM of \cite{latouche2011overlapping} (OSBM), and Bayesian non-negative matrix factorization (BNMF) by \cite{psorakis2011overlapping}.  
For methods that return a continuous membership assignment (OCCAM, BKN and Mixed-SCORE), we follow the approach of \cite{Zhang2020} and set to zero the values of the membership matrix $\hat{\bZ}$ that are smaller than $1/K$.

Figure \ref{fig:compare_methods} shows the average NVI of these methods as a function of $\rho$ under different scenarios. Most methods show an excellent performance when $\rho=0$, but as the between-community connectivity increases, the performance of all methods deteriorate. Our methods (SPCA-CD and SPCA-eig) generally achieve the best performance when the fraction of nodes in overlaps is either 0 or 10\%, and are highly competitive for 40\% in overlaps as well. OCCAM  performs well, which is reasonable since the networks were generated from this model, but it appears that in most cases we are able to fit it more accurately.  Mixed-SCORE has a good performance with no overlaps, but deteriorates quicker than other methods with the introduction of overlaps.    We should keep in mind that OCCAM and Mixed-SCORE are designed for estimating continuous memberships, and the threshold of $1/K$  to obtain binary memberships might not be optimal. While non-overlapping community detection methods can be alternatively used for the scenario when there is only a single membership per node, our methods are able to accurately assign the nodes to a single community without knowing the number of memberships.

\begin{figure}
	\centering
	\includegraphics[width=\textwidth]{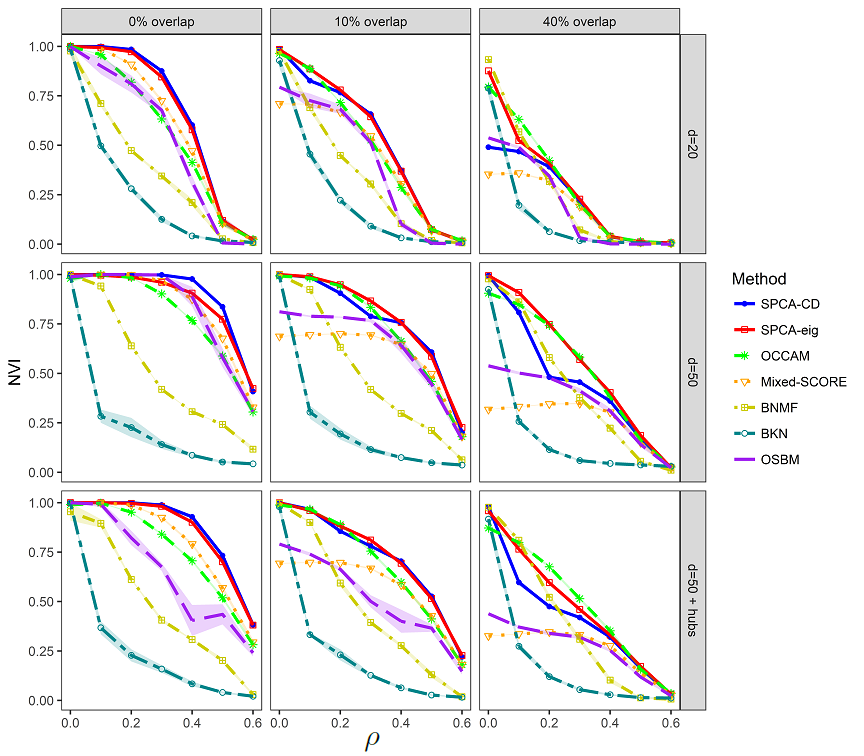}
	\caption{Performance of different methods for overlapping community detection measured by NVI, as a function of $\rho$, for different amounts of overlaps (columns) and node degrees (rows).\label{fig:compare_methods}}
\end{figure}

\subsection{Computational efficiency}
Scalability to large networks is an important issue for real data applications. Spectral methods for overlapping and non-overlapping community detection are very popular, partly due to its scalability to large networks. The accuracy of those methods usually depends on the clustering algorithm, which in practice might require multiple initial values to get an accurate result.  In contrast, our methods based on sparse principal component analysis directly estimate the membership matrix without having to estimate the eigenvectors or perform a clustering step. Although the accuracy of our methods does depend on the tuning parameter $\lambda$,   the algorithms are robust to the choice of this parameter and provide good solutions over a reasonably wide range.   

To compare computation efficiency empirically, we simulated networks with different number of communities ($K=3, 6$ and $10$) and increased the number of nodes while keeping the average degree fixed to $d=50$, with 10\% overlapping nodes. For simplicity, we used a single fixed value $\lambda=0.6$ for our methods. We initialized SPCA-CD with a random membership matrix, and  SPCA-eig with the  SPCA-CD as starting point, and therefore report the running time of SPCA-eig as the sum of the two.  We compare the performance of our methods with OCCAM, which uses a k-medians clustering to find the centroids of the overlapping communities. Since k-medians is computationally expensive and is not able to handle large networks, we also report the performance of OCCAM with the clustering step performed with k-means instead.  Additionally, we report the running time of calculating the $K$ leading eigenvectors of the adjacency matrix, which is a starting step required by  spectral methods. 
All simulations are run using Matlab R2015a.
The leading eigenvectors of the adjacency matrix are computed using the standard Matlab function \texttt{eigs}$(\cdot, K)$.

The performance in terms of time and accuracy of different methods is shown in Figure \ref{fig:computationalSPCA}. Our methods based on SPCA incur a computational cost similar to that  calculating the $K$ leading eigenvectors of the adjacency matrix, and when the number of communities is not large, our methods perform even faster. The original version of OCCAM based on k-medians clustering is limited in the size of networks it can handle, and when using k-means  the computational cost is still larger than SPCA. Our methods produce very accurate solutions in all the scenarios considered, while OCCAM deteriorates when the number of communities increases. Note that in general the performance of all methods can be improved by using different random starting values, either for clustering in OCCAM or for initializing our methods, but this will increase the computational cost; choosing tuning parameters, if the generic robust choice is not considered sufficient, will do the same.

\begin{figure}
	\centering
	\includegraphics[width=\textwidth]{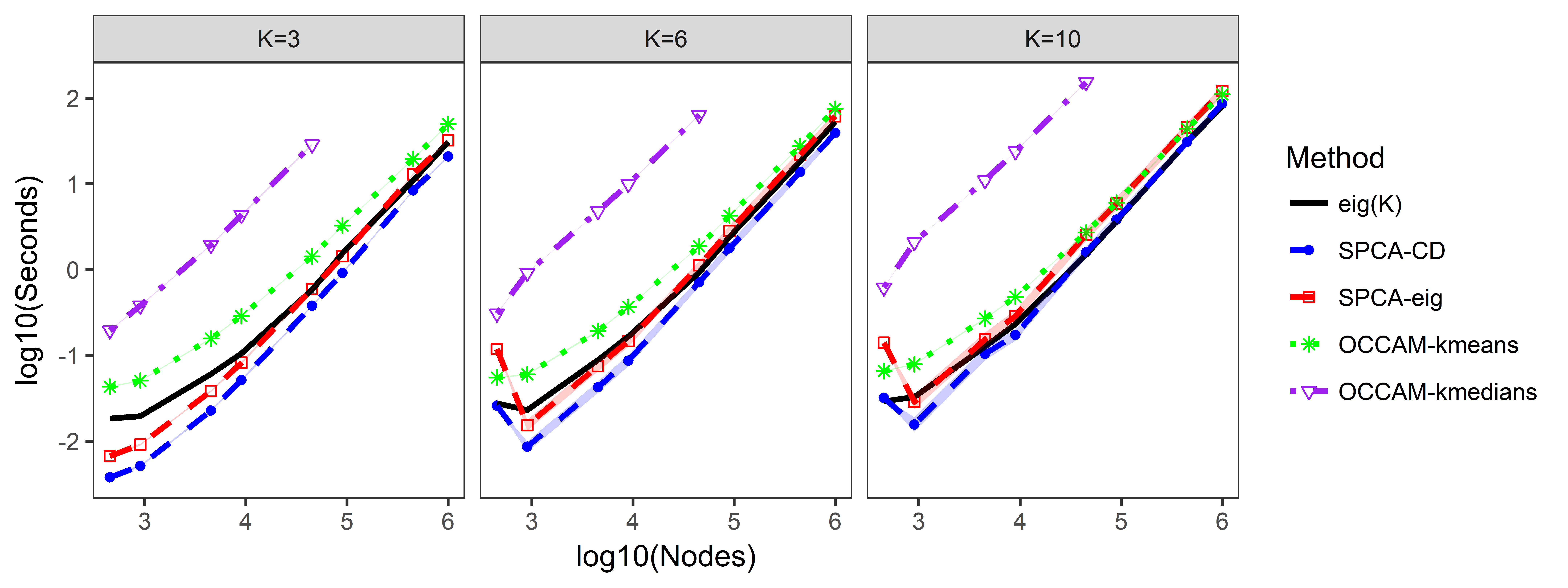}
	\includegraphics[width=\textwidth]{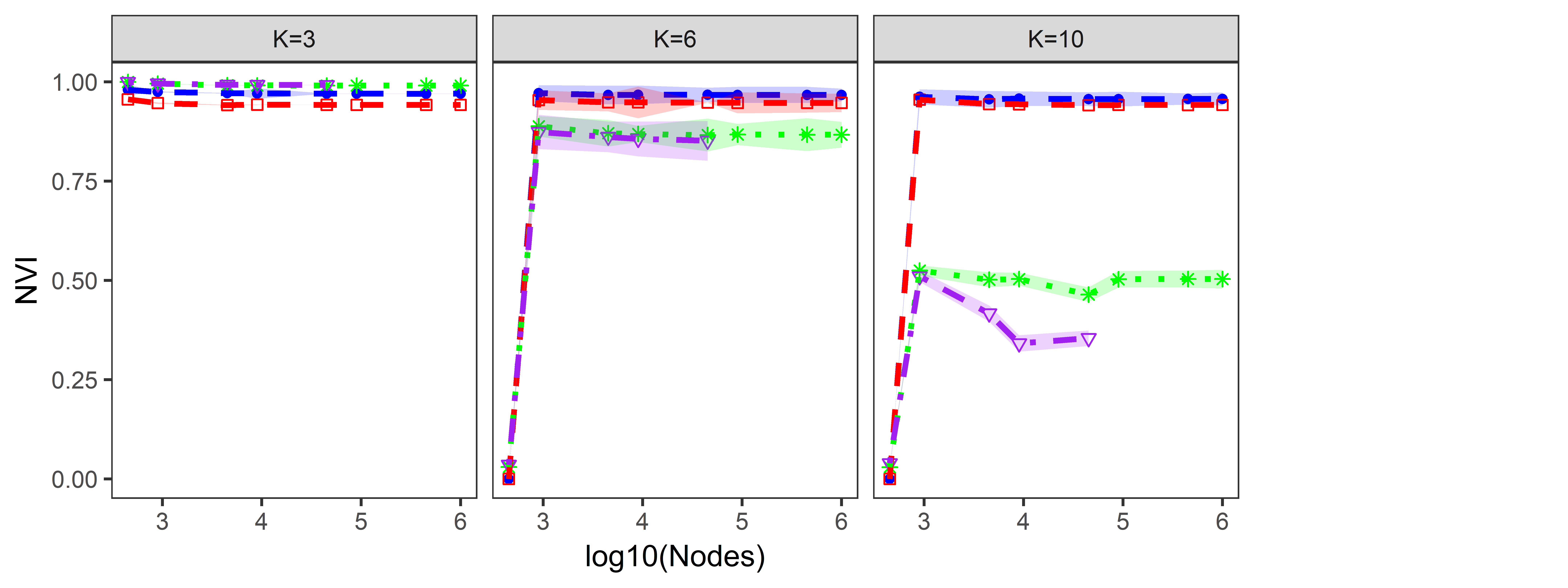}
	\caption{Performance of different methods in terms of running time (top row) and NVI (bottom row) as a function of the size of the network $n$ for varying number of communities. We compare SPCA-CD and SPCA-eig, OCCAM with two different clustering options (k-means and k-medians), and the computational cost of calculating the $K$ leading eigenvectors (eig(K)). }\label{fig:computationalSPCA}
\end{figure}

\section{Evaluation on real-world networks}

In this section, we evaluate the performance of our methods on several real-world networks. Zachary's karate club network \citep{zachary1977information} and  the political blog network \citep{adamic2005political} are two classic examples with community structure, and we start with them as an illustration. We then compare our method with other state-of-the-art overlapping community detection algorithms on the popular benchmark dataset focused specifically on overlapping communities \citep{mcauley2012learning}, which contains many social ego-networks from Facebook, Twitter and Google Plus. Finally, we use our methodology to identify communities in a novel dataset consisting of Twitter following relationship between national representatives in the Mexican Chamber of Deputies.

\subsection{Zachary's karate club network \label{sec:karate}}

\cite{zachary1977information} recorded the real-life interactions of 34 members of a karate club from a period of two years. During this period, the club split into two factions due to a conflict between the leaders, and these factions are taken to be the ground truth communities.  

We fit our methods to the karate club network, and if we use either BIC or CV to choose the optimal threshold parameter, the solution consists of two community with only pure nodes  and matches the ground truth.  This serves as reassurance that our method will not force overlaps on the communities when there are not actually there.   In contrast, OCCAM assigns 17 nodes (50\%) to both communities, and mixed-SCORE assigns 26 (76\%).

\begin{figure}
	\centering
	\includegraphics[width=0.95\textwidth]{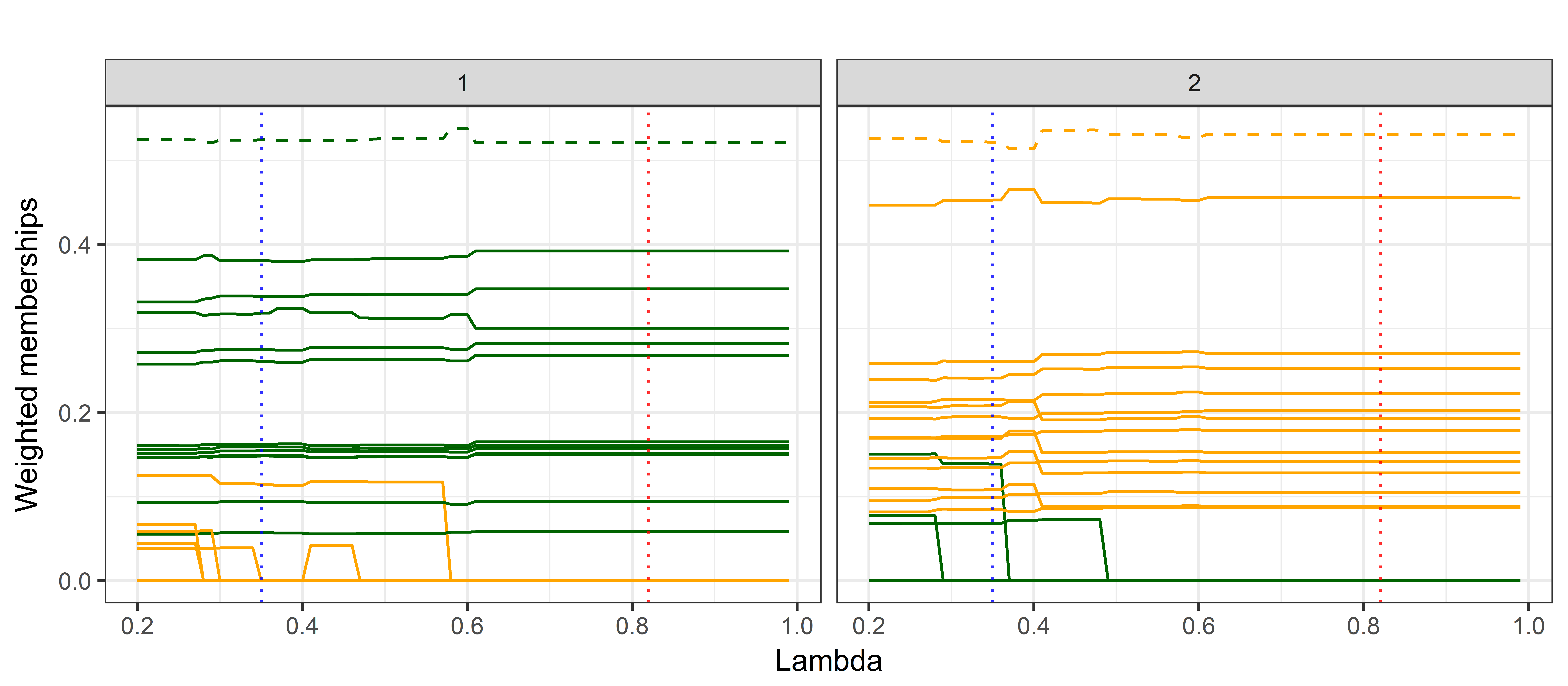}
	\caption{Node membership paths in each discovered community (left and right) as a function of the thresholding parameter $\lambda$. Colors represent ground truth, and the dashed line represents the leader of each faction. The dashed vertical lines correspond to the values of $\lambda$ chosen by BIC (blue) and CV (red). As the thresholding parameter decreases, the membership matrix becomes sparser and nodes appear on both  communities.
	}\label{fig:karatepath}
\end{figure}

If we look at the entire path over the threshold parameter $\lambda$, we can also see which nodes are potential overlaps.   Both our methods can identify community memberships, but SPCA-eig also provides information on the degree-correction parameter. In Figure \ref{fig:karatepath}, we examine the effect of the threshold parameter $\lambda$ on SPCA-eig solutions. The plots show the paths of the node membership vectors as a function of $\lambda$. Each panel corresponds to one of the columns of the membership matrix,  the colors indicate the true factions, and the paths of the faction leaders are indicated with a dashed line. The $y$-axis shows the association of the node to the corresponding community, with membership weighted by the degree-corrected parameter. In each community, the nodes with the largest values of $y$ are the faction leaders, which are connected to most of the nodes in the faction. For larger values of $\lambda$, all nodes are assigned to pure communities corresponding to true factions, but as $\lambda$ decreases the membership matrix contains more non-zero values.

\subsection{The political blogs network}
The political blog network \citep{adamic2005political} represents the hyperlinks between 1490 political blogs around the time of the 2004 US presidential election. The blogs were manually labeled as liberal or conservative, which is taken as ground truth, again without any overlaps.   This dataset is more challenging for community detection than the karate club network, due to the high degree heterogeneity \citep{karrer2011stochastic}. 
Following the literature, we focus on the largest connected component of the network, which contains 1222 nodes, and convert the edges to undirected, so $A_{ij}=1$ if either blog $i$ has an hyperlink to blog $j$ or vice versa.

Figure \ref{fig:polblog}  shows the plot of the political blog network membership paths using Algorithm \ref{alg:spca-comdet} as a function of $\lambda$ and colored by the ground truth labels. 
Using the tuning parameter selected by BIC, the algorithm assigns only 29 nodes to both communities. Other overlapping community methods assigned many more nodes to both communities:  229 (19\%) for OCCAM, and 195 (16\%) for  mixed-SCORE.  To convert the estimated solution into non-overlapping memberships in order to compare with the ground truth, each node is assigned to the community corresponding to the largest entry on the corresponding row, resulting in  52  misclustered nodes, a result similar to other community detection methods that are able to operate on networks with heterogeneous node degrees \citep{jin2015}. The membership paths that correspond to these misclustered nodes are represented with dashed lines. The fact that most of the overlapping nodes discovered by the algorithm were incorrectly  clustered supports the idea that these are indeed overlapping nodes, as the disagreement between the unsupervised clustering result and the label given by the authors might indicate that these are nodes with no clear membership.

\begin{figure}
    \centering
    \includegraphics[width=0.95\textwidth]{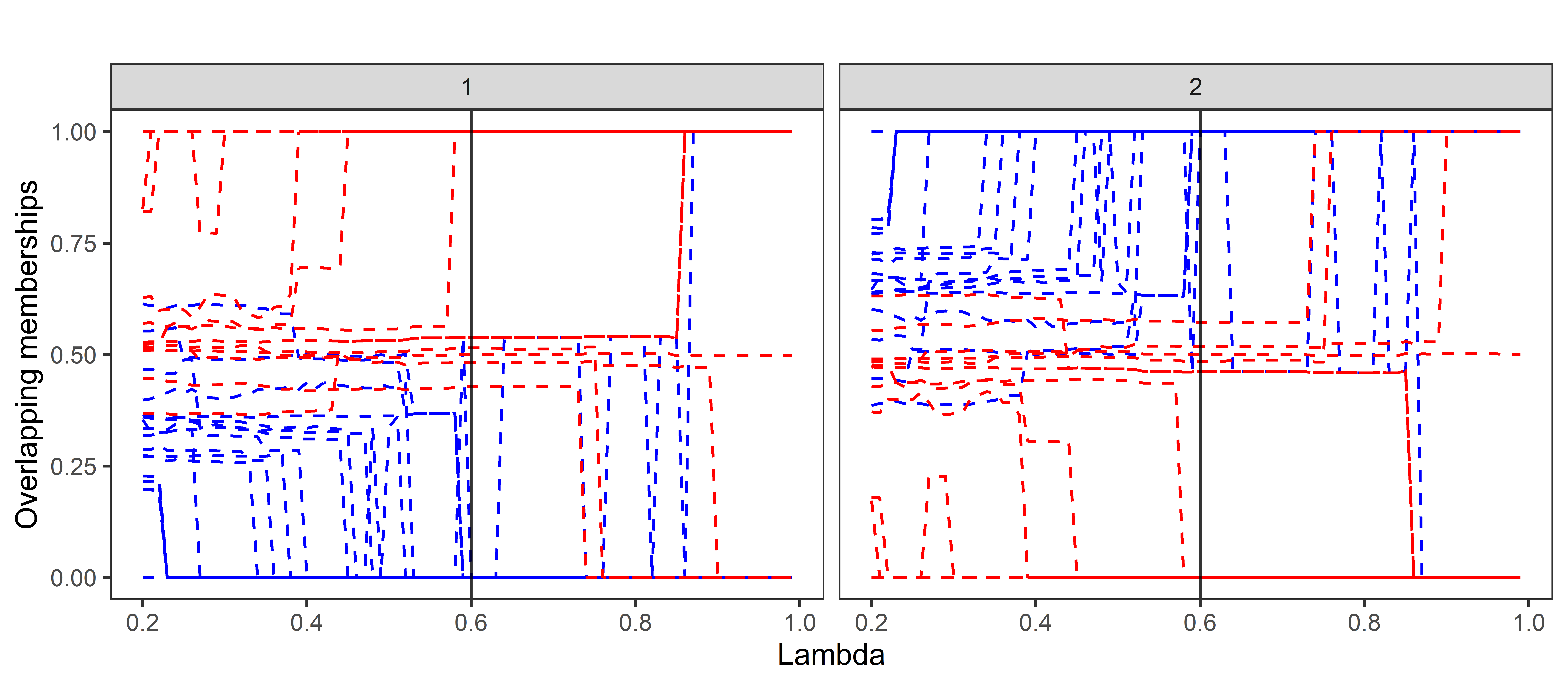}
    \caption{Node membership paths in the two discovered communities (left and right) as a function of the thresholding parameter $\lambda$. Colors represent ground truth, and the dashed lines represent  misclassified nodes. Most of the misclassified nodes are identified as overlapping.   The black vertical lines correspond to the values of $\lambda$ chosen by BIC.}
    \label{fig:polblog}
\end{figure}

\subsection{The SNAP social networks}

Social media platforms provide a rich source of data for the study of social interactions. \cite{mcauley2012learning} presented a large collection of ego-networks from Facebook, Google Plus and Twitter. An ego-network represents the virtual friendships or following-follower relationships between a group of people that are connected to a central user. Those platforms allow the users to manually label or classify their friends into groups or social circles, and this information can be used as a ground truth to compare the performance of methods for detecting communities. In \cite{Zhang2020}, several state-of-the-art overlapping community detection methods were compared on these data, showing a competitive performance of OCCAM.   We again include OCCAM  and Mixed-SCORE as examples of spectral methods for overlapping community detection. We obtained a pre-processed version of the data directly from the first author of \cite{Zhang2020};  for the details on the pre-processing steps, see Section 6 of  \cite{Zhang2020}.

\begin{table}[ht]
	\centering
	\begin{tabular}{l|rrrr}
		\hline
		Dataset (sample size) & SPCA-Eig  &  SPCA-CD    & OCCAM & M-SCORE \\ 
		\hline
		Facebook (6) & 0.573  (0.090) & $ \mathbf{0.588  (0.088)}$ & 0.548  (0.118) & 0.493  (0.137) \\ 
		Google Plus (39) & 0.408  (0.047)  & 0.427  (0.048) & $\mathbf{  0.501  (0.039)}$  & 0.475  (0.039) \\ 
		Twitter (168) & 0.435  (0.021) & $\mathbf{ 0.477  (0.021) }$ &  0.450  (0.021) & 0.391  (0.020) \\ 
		\hline
	\end{tabular}
	\caption{Average performance measured by NVI (with standard errors in parentheses) of overlapping community detection methods on SNAP ego-networks.}\label{tab:egonet}
\end{table}

Table \ref{tab:egonet} shows the average performance measured by NVI for the community detection methods we compared. For our methods, the value of $\lambda$  was chosen by BIC, like in simulations. For OCCAM and mixed-SCORE, we thresholded  continuous membership assignments at $1/K$. Our methods (SPCA-eig and SPCA-CD) show a slightly better performance than the rest of the methods in the Facebook networks. SPCA-CD performs better than other methods on the Twitter networks, but SPCA-eig does not perform better than OCCAM. For Google Plus networks, OCCAM and mixed-SCORE have a clear advantadge. Figure \ref{fig:egonet-hist} presents a visualization of the distribution of of several network summary statistics  for each social media platform.    It suggests that Google Plus networks might be harder because they tend to have more overlaps between communities, although they also tend to have more nodes.   Facebook networks, in contrast, have higher modularity values and smaller overlaps, and thus should be easier to cluster. In general, all methods perform reasonably, with SPCA-CD given the best overall performance Facebook and Twitter networks, and OCCAM being overall best on Google Plus.  This is consistent with what we observed in simulations and what we would expect by design:  our methods are more likely to perform better than others when membership vectors are sparse.  

\begin{figure}
	\includegraphics[width = \textwidth]{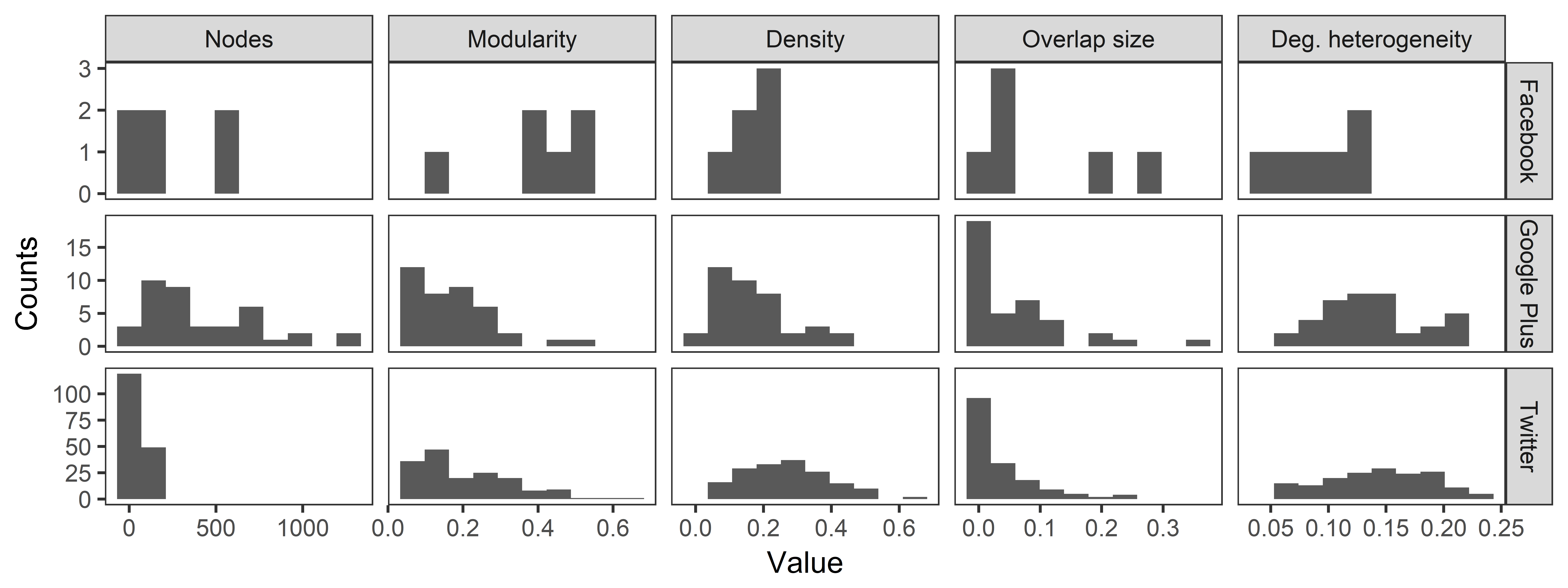}
	\caption[Histograms of summary statistics for SNAP ego-networks.]{Histograms of summary statistics for SNAP ego-networks. \citep{mcauley2012learning}. The histograms show the number of nodes ($n$), Newman-Girvan modularity, density (number of edges divided by $n^2$), overlap size (percentage of nodes with overlapping memberships) and degree heterogeneity measured by the standard deviation of node degrees divided by $n$).}\label{fig:egonet-hist}
\end{figure}

\subsection{Twitter network of Mexican Representatives}

We consider the Twitter network between members of the  Mexican Chamber of Deputies (the lower house of the Mexican parliament), from  the LXIII Legislature for the period of 2015-2018. The network captures a snapshot of Twitter data from December 31st, 2017, and has 409 nodes corresponding to the representatives with a valid Twitter handle.  Two nodes are connected by an edge if at least one of them follows the other on Twitter;  we ignore the direction.   Each member belongs to one of eight political parties or is an independent, resulting in $K=9$ true communities;  see Figure~\ref{fig:Mexican}. The data can be downloaded from \url{https://github.com/jesusdaniel/spcaCD/data}.

\begin{figure}
    \centering
    \includegraphics[width=0.48\textwidth]{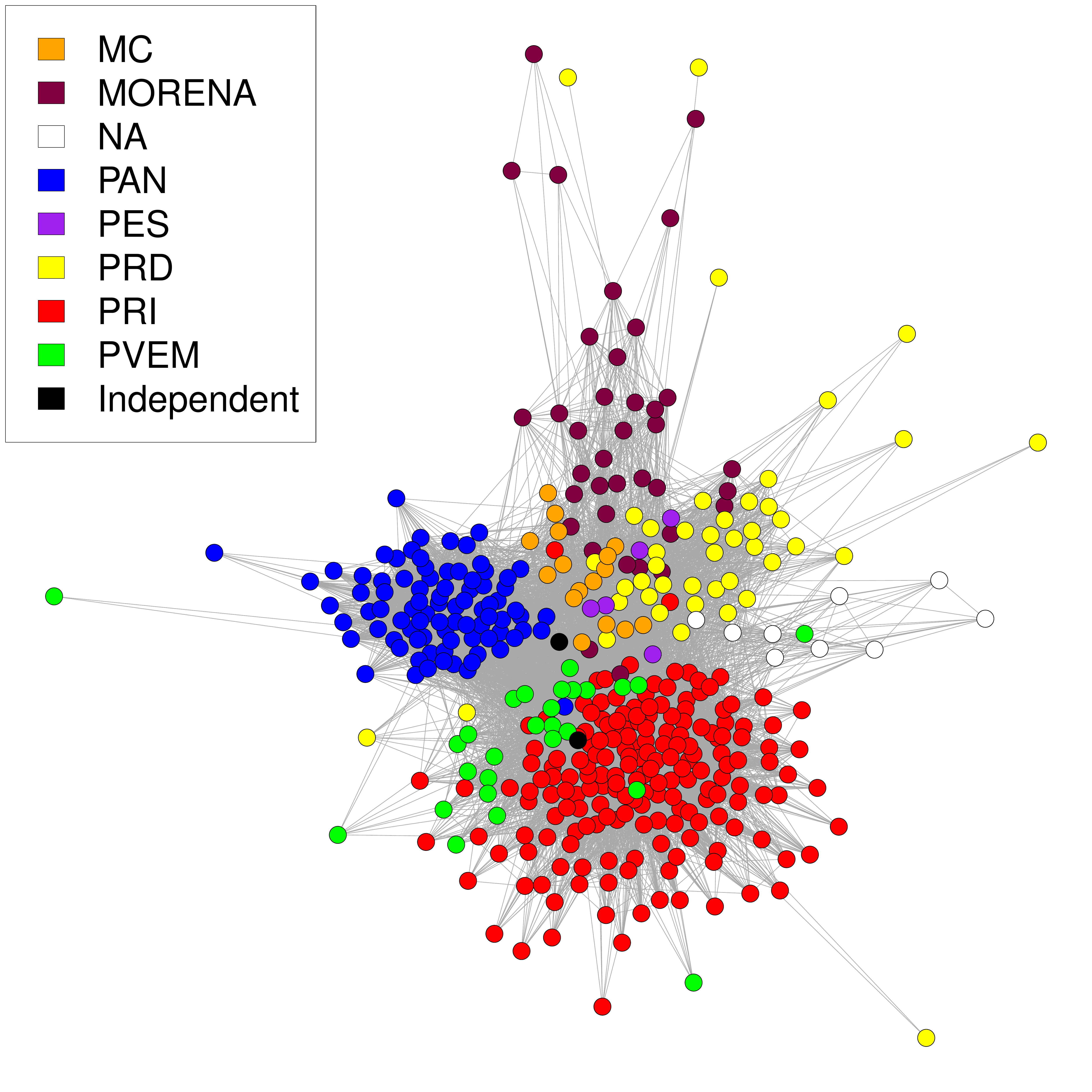}
    \includegraphics[width=0.48\textwidth]{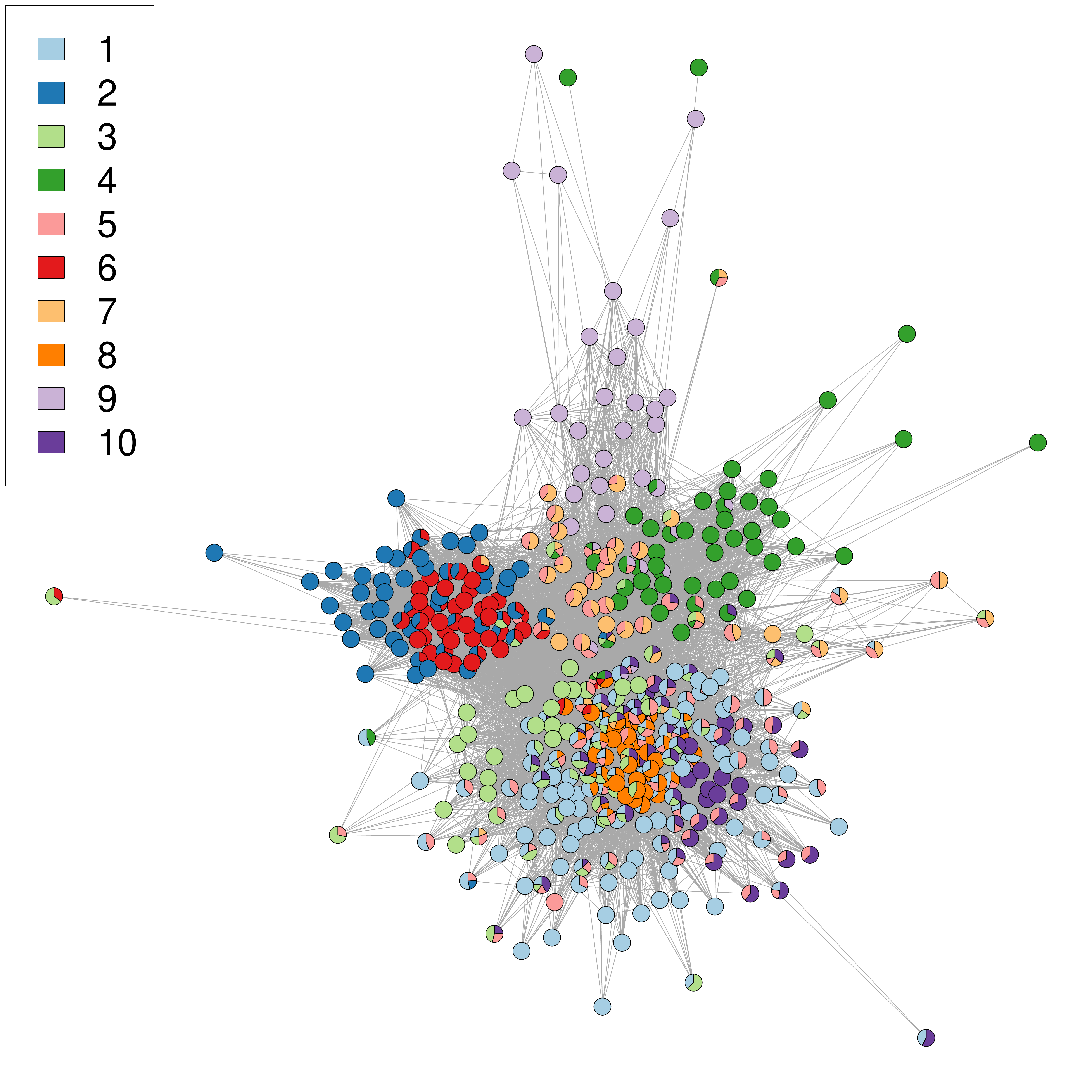}
    \caption{Twitter network of Mexican Chamber of Deputies from 2017.  Left: party affiliations (true communities);  right:  output of Algorithm 1 with estimated $K=10$.}
    \label{fig:Mexican}
\end{figure}

We apply Algorithm~\ref{alg:spca-eig} to this network, using 20-fold edge cross-validation to estimate the number of communities and choose thresholding parameters.  Figure~\ref{fig:CV-MexicanRepresentatives} shows the average MSE across all the folds,  minimized when $K=10$. However, solutions corresponding to all $K$ from 8 to 11 are qualitatively very similar, the only difference being that the largest parties (PRI and PAN) get split into smaller communities, with clusters containing  the most popular  members of each party, and/or factions within a party that are more connected  to some of the other parties.

\begin{figure}
    \centering
    \includegraphics[width=0.5\textwidth]{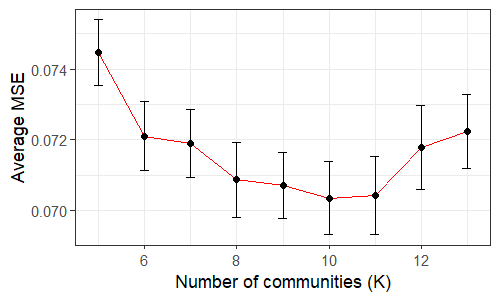}
    \caption{Average cross-validation mean squared error (MSE) over 20 cross-validation folds as a function of the number of communities (the error bars indicate two standard errors). 
    }
    \label{fig:CV-MexicanRepresentatives}
\end{figure}

A comparison between the estimated  membership vectors and party affiliations reveals that our algorithm discovers meaningful overlapping communities. Table~\ref{table:Memberships-MexicanRep} compares the estimated  overlapping memberships with the party labels, by counting the number of nodes that are assigned to a given community  (recall that each node can be assigned to more than one community) and belong to a specific party.  Some of the communities contain representatives from two or more different parties, which is a reflection of coalitions and factions. For example, the majority of nodes in community 3 belong to either PRI or PVEM, which formed a coalition during the preceding election  in 2015. On the other hand, nodes from MORENA in community 4 were  members of PRD before MORENA  was formed in 2014. The plot on the right in Figure~\ref{fig:Mexican} also shows a significant overlap between these parties. 

Exploring  individual memberships reveals that the number of communities a node is assigned to seems associated with its overall popularity in the network. For example, the 
node with the largest number of community memberships (7 in total) is the representative with the largest degree in the network, while the second largest number of memberships (5 in total) is the president of the Chamber of Deputies in 2016.

\begin{table}[ht]
\centering
\begin{tabular}{l|rrrrrrrrrr}
  \hline
  &  \multicolumn{10}{c}{Estimated communities}\\
Party (\#nodes) & {\bfseries 1} & {\bfseries 2} & {\bfseries 3} & {\bfseries 4} & {\bfseries 5} & {\bfseries 6} & {\bfseries 7} & {\bfseries 8} & {\bfseries 9} & {\bfseries 10} \\ 
  \hline
 MC (16) &   0 &   0 &   0 &   0 &  14 &   0 &  16 &   0 &   0 &   0 \\ 
  MORENA (37) &   1 &   0 &   1 &   9 &   4 &   0 &   2 &   0 &  34 &   1 \\ 
  NA (9) &   2 &   0 &   4 &   1 &   8 &   0 &   9 &   0 &   0 &   1 \\ 
   PAN (84) &   0 &  53 &   2 &   0 &   2 &  49 &   3 &   1 &   0 &   0 \\ 
   PES (5) &   0 &   0 &   2 &   0 &   3 &   0 &   5 &   0 &   0 &   1 \\ 
   PRD (44) &   2 &   1 &   3 &  42 &   2 &   0 &   2 &   0 &   0 &   3 \\ 
   PRI (185) & 133 &   1 &  52 &   3 &  71 &   3 &  12 &  53 &   2 &  59 \\ 
   PVEM (27) &   1 &   0 &  27 &   0 &   3 &   1 &   0 &   1 &   0 &   0 \\ 
  Independent (2) &   1 &   0 &   0 &   0 &   1 &   0 &   1 &   1 &   0 &   0 \\ 
   \hline
\end{tabular}
\caption{Each entry indicates the number of nodes from the party (row) assigned to a given community (column).  Nodes are counted multiple times if they were assigned to more than one community.}
\label{table:Memberships-MexicanRep}
\end{table}

\section{Discussion}
We presented an approach to estimate a regularized basis of the principal subspace of the network adjacency matrix, and showed that its sparsity pattern encodes the community membership information.    Varying the amount of regularization controls the sparsity of the node memberships and allows to one to obtain a family of solutions of increasing complexity.   These methods show good accuracy in estimating the  memberships, and are computationally very efficient allowing to scale well to large networks. Our present theoretical results are limited to fixed points of the algorithms; establishing theoretical guarantees in more general settings as analyzing conditions for convergence to the fixed point are  left for future work. 

Spectral inference has been used for multiple tasks on networks:  community detection \citep{lei2015consistency,le2017concentration}, hypothesis testing \citep{tang2017semiparametric}, multiple network dimensionality reduction \citep{levin2017central} and network classification \cite{arroyo2020simultaneous}. While the principal eigenspace of the adjacency matrix can provide the information needed for these problems, our results suggest that regularizing  the eigenvectors can lead to improved estimation and computation in community detection;  exploring the effects of this type of regularization in other network tasks is a promising direction for future work.

\section*{Acknowledgements}
This research was supported in part by NSF grants DMS-1521551 and DMS-1916222. The authors would like to  thank Yuan Zhang for helpful discussions, and Advanced Research Computing at the University of Michigan for  computational resources and services.

\bigskip

\bibliographystyle{apalike}
\bibliography{biblio}

\appendix
\section{Appendix}
\begin{proof}[Proof of Proposition \ref{prop:identifiabilitySPCA}]
	Because $\bV$ and $\widetilde{\bV}$ are two bases of the column space of $\bP$, and $\operatorname{rank}(\bP)=K$, then $\bP=\bV\bU^\top=\widetilde{\bV}\widetilde{\bU}^\top$ for some full rank matrices $\bU,\widetilde{\bU}\in\Bbb{R}^{n\times K}$ and therefore 
	\begin{equation}
	\bV=\widetilde{\bV}(\widetilde{\bU}^\top\bU)({\bU}^\top\bU)^{-1}. \label{eq:identifiability1}
	\end{equation}
	Let $(\widetilde{\bU}^\top\bU)({\bU}^\top\bU)^{-1}=\bLambda$.
	We will show that $\bLambda=\bQ\bD$ for a permutation matrix $\bQ\in\{0,1\}^{K\times K}$ and a diagonal matrix $\bD\in\real^{K\times K}$, or in other words, this is a generalized permutation matrix.
	
	Let $\btheta,\widetilde{\btheta}\in\real^n$ and $\bZ,\widetilde{\bZ}\in\real^{n\times K}$ such that $\btheta_i = \left(\sum_{k=1}^K\bV_{ik}^2\right)^{1/2}$, $\widetilde{\btheta}_i = \left(\sum_{k=1}^K\widetilde{\bV}_{ik}^2\right)^{1/2}$, and $\bZ_{ik}= \bV_{ik}/\btheta_i$ if $\btheta_i>0$, and $\bZ_{ik}=0$ otherwise (similarly for $\widetilde{\bZ}$).
	Denote by $\mathcal{S}_1=(i_1,\ldots,i_K)$ to the vector of row indexes that satisfy $\bV_{i_jj}> 0$ and $\bV_{i_jj'}=0$ for $j'\neq j$, and $j=1,\ldots,j$ (these indexes exist by  assumption). In the same way, define $\mathcal{S}_2=(i'_1,\ldots,i'_K)$ such that $\widetilde{\bV}_{i'_jj}> 0$ and $\widetilde{\bV}_{i_jj'}=0$ for $j'\neq j$. $j=1,\ldots,j$. Denote by $\bZ_{\mathcal{S}}$ to the $K\times K$ matrix  formed by the rows indexed by $\mathcal{S}$. Therefore 
	$$\bZ_{\mathcal{S}_1}= \bI_K=\widetilde{\bZ}_{\mathcal{S}_2}.$$
	Write $\bTheta = \diag(\btheta)\in\real^{n\times n}$ and $\widetilde\bTheta = \diag(\widetilde\btheta)\in\real^{n\times n}$. From Equation~\eqref{eq:identifiability1} we have 
	\begin{align*}
	(\bTheta\bZ)_{\mathcal{S}_2}= & (\widetilde{\bTheta}\widetilde{\bZ})_{\mathcal{S}_2}\bLambda   =\widetilde{\bTheta}_{\mathcal{S}_2, \mathcal{S}_2} \widetilde{\bZ}_{\mathcal{S}_2}\bLambda =  \widetilde{\bTheta}_{\mathcal{S}_2, \mathcal{S}_2}\bLambda\ ,
	\end{align*}
	where $\Theta_{\mathcal{S}, \mathcal{S}}$ is the submatrix of $\Theta$ formed by the rows and columns indexed by $\mathcal{S}$.
	Thus, 
	$$\bLambda = (\widetilde{\bTheta}_{\mathcal{S}_2, \mathcal{S}_2}^{-1}{\bTheta}_{\mathcal{S}_2, \mathcal{S}_2})\bZ_{\mathcal{S}_2},$$
	which implies that $\bLambda$ is a non-negative matrix. Applying the same to the equation  $(\bTheta\bZ)_{\mathcal{S}_1}\bLambda^{-1}=  (\widetilde{\bTheta}\widetilde{\bZ})_{\mathcal{S}_1}$, we have 
	$$\bLambda^{-1} = ({\bTheta}_{\mathcal{S}_1, \mathcal{S}_1}^{-1}\widetilde{\bTheta}_{\mathcal{S}_1, \mathcal{S}_1})\widetilde{\bZ}_{\mathcal{S}_1}.$$
	Hence, both $\bLambda$ and $\bLambda^{-1}$ are non-negative matrices, which implies that $\bLambda$ is a positive generalized permutation matrix, so $\bLambda=\bQ\bD$ for some permutation matrix $\bQ$ and a diagonal matrix $\bD$ with $\diag(\bD)>0$.
\end{proof}

\begin{proof}[Proof of Proposition~\ref{prop:V-occam}]
	Let $\btheta\in\real^n$ be a vector such that $\btheta_i^2=\sum_{k=1}^K\bV_{ik}^2$, and define $\bZ\in\real^{n\times K}$ such that $\bZ_{ik}=\frac{1}{\theta_i}\bV_{ik}$, for each $i\in[n], k\in[K]$. Let $\bB = (\bV^\top\bV)^{-1}\bV^T\bU$.	 To show that $\bB$ is symmetric, observe that $\bV\bU^\top = \bP = \bP^\top = \bU\bV^\top$.   Multiplying both sides by $\bV$ and $\bV^\top$,
	$$\bV^\top \bV \bU^\top \bV = \bV^\top \bU\bV^\top \bV,$$
	and observing that $(\bV^\top\bV)^{-1}$ exists since $\bV$ is full rank, we have 
	$$ \bU^\top \bV(\bV^\top \bV)^{-1} = (\bV^\top \bV)^{-1}\bV^\top \bU,$$
	which implies that $\bB^\top = \bB$. 
	To obtain the equivalent representation for $\bP$, form a diagonal matrix $\bTheta=\diag(\btheta)$. Then $\bTheta\bZ= \bV$, and
	\begin{align*}
	\bTheta\bZ\bB\bZ^\top \bTheta & = \bV[(\bV^T\bV)^{-1}\bV^T\bU]\bV^{\top} 
	 =  \bV(\bV^T\bV)^{-1}\bV^T\bV\bU^{\top} 
	 = \bV\bU^\top = \bP.
	\end{align*}
	Finally, under the conditions of Proposition~\ref{prop:identifiabilitySPCA}, $\bV$ uniquely determines the pattern of zeros of any non-negative eigenbasis of $\bP$, and therefore $\supp(\bV) = \supp(\bTheta\bZ\bQ) = \supp(\bZ\bQ)$ for some permutation $\bQ$.
\end{proof}

\begin{proof}[Proof of Proposition \ref{prop:fixed-point1}]
Suppose that $\bP=\bV\bU^\top$ for some non-negative matrix $\bV$ that satisfies the assumptions of Proposition~\ref{prop:identifiabilitySPCA}. Let $D\in\real^K$ such that $D_i= {\|\bV_{\cdot k}\|_2}$ and $\bD=\diag(D)$. Then $\bP = \widetilde{\bV}\bD\bU^\top$. Let
$\bV^{(0)}  = \widetilde{\bV}$ be the initial value of Algorithm~\ref{alg:spca-eig}. Then, observe that
$$\bT^{(1)} = \bP\widetilde\bV = \widetilde\bV\bD\bU^T\widetilde\bV,$$
\begin{align*}
    \widetilde{\bT}^{(1)} & = \bT^{(1)}\left[\widetilde\bV^\top\bT^{(1)}\right]^{-1}(\widetilde\bV^T\widetilde\bV)\\
    & = \widetilde\bV\bD (\bU^T\widetilde\bV)\left(\bU^T\widetilde\bV\right)^{-1}\bD^{-1}(\widetilde\bV^\top\widetilde\bV)^{-1}(\widetilde\bV^\top\widetilde\bV)\\
    & = \widetilde\bV.
\end{align*}
Suppose that $\lambda\in[0,v^\ast).$ Then,   $\lambda\max_{j\in[K]}|\widetilde{\bV}| <\widetilde\bV_{ik}$ for all $i\in[n],k\in[K]$ such that $\bV_{ik}>0$, and hence $\bU^{(1)}=\mathcal{S}(\widetilde\bV, \lambda) = \widetilde\bV$. Finally, since $\|\widetilde\bV_{\cdot,k}\|_2=1$ for all $k\in[K]$, then $\bV^{(1)}=\widetilde\bV$. 
\end{proof}

\begin{proof}[Proof of Theorem \ref{theo-sparsistency}]
	The proof consists of a one-step fixed point
	analysis of Algorithm \ref{alg:spca-comdet}. We will show that if $\bZ^{(t)}= \bZ$, then $\bZ^{(t+1)}=\bZ$ with high probability. Let $\bT=\bT^{(t+1)}=\bA\bZ$ be value after the multiplication step. Define $\bC\in\Bbb{R}^{K\times K}$ to be the diagonal matrix with community sizes on the diagonal, $\bC_{kk} = n_k=\|\bZ_{\cdot, k}\|_1$. Then $\widetilde{\bT}=\widetilde{\bT}^{(t+1)}= \bT\bC^{-1}$. In order for the threshold to set the correct set of entries to zero, a sufficient condition is that in each row $i$ the largest element of $\widetilde{\bT}_{i,\cdot}$ corresponds to the correct community. Define $\mathcal{C}_k\subset [n]$ as the node subset corresponding to community $k$. Then,
	\begin{equation*}
	\widetilde{\bT}_{ik} = \frac{1}{{n_k}} \bA_{i,\cdot} \bZ_{\cdot, k} = \frac{1}{{n_k}}\sum_{j\in\mathcal{C}_k}\bA_{ij}.
	\end{equation*}
	Therefore $\widetilde{\bT}_{ik}$ is a sum of independent and identically distributed Bernoulli random variables. Moreover, for each $k_1$ and $k_2$ in $[K]$, $\widetilde{\bT}_{ik_1}$ and $\widetilde{\bT}_{ik_2}$ are independent of each other. 
	
	Given a value of $\lambda\in(0,1)$, let 
	$$\mathcal{E}_i(\lambda)=\{ \lambda |\widetilde{\bT}_{ik_i}|>  |\widetilde{\bT}_{ik_j}|, i\in\mathcal{C}_{k_i} \forall k_j\neq k_i\}$$ 
	be the event that the largest entry of $\widetilde{\bT}_{i\cdot}$ corresponds to $k_i$, that is, the entry corresponding to the community of node $i$, and all the other indexes in that row are smaller in magnitude than $\lambda |\widetilde{\bT}_{ik_i}|$.   Let $\bU = \bU^{(t+1)}=\mathcal{S}(\widetilde{\bT}^{(t+1)}, \lambda)$ be the matrix obtained after the thresholding step. Under the event $\mathcal{E}(\lambda)=\bigcap_{i=1}^n \mathcal{E}_i(\lambda)$, we have that $\|\bU_{i,\cdot}\|_\infty = \bU_{ik_i}$ for each $i\in[n]$, and hence
	\[\bU_{ik}= \left\{\begin{array}{cl}
	\bU_{ik_i} & \text{if }k=k_i,\\
	0 & \text{otherwise.}
	\end{array}\right. \]
	Therefore, under the event $\mathcal{E}(\lambda)$, the thresholding step recovers the correct support, so $\bZ^{(t+1)}=\bZ$. 
	
	Now we verify that under the conditions of Theorem~\ref{theo-condseparation}, the event $\mathcal{E}(\lambda)$ happens with high probability. By a union bound,
	\begin{align}
	    	\Bbb{P}(\mathcal{E}(\lambda))  \geq  1-\sum_{i=1}^n \Bbb{P}(\mathcal{E}_i(\lambda)^C) 
	 \geq  1-\sum_{i=1}^n\sum_{j\neq k_i}\Bbb{P}(\widetilde{\bT}_{ij}>\lambda \widetilde{\bT}_{ik_i}).\label{eq:theosparsistency-unionbound}
	\end{align}
For $j\neq k_i$, $\widetilde{\bT}_{ij}-\lambda \widetilde{\bT}_{ik_i}$ is a sum of independent random variables with expectation
	\begin{eqnarray}
		\Bbb{E}\left[\widetilde{\bT}_{ij}-\lambda \widetilde{\bT}_{ik_i}\right]  =  \frac{1}{n_j}\sum_{j\in\mathcal{C}_j}\Bbb{E}[\bA_{ij}] - 
		\frac{\lambda}{n_{k_i}}\sum_{j\in\mathcal{C}_{k_i}}\Bbb{E}[\bA_{ij}]\nonumber\\
		 =  q - \lambda \frac{n_{k_i}-1}{n_{k_i}}p.\label{eq:proofE}
	\end{eqnarray}
	By Hoeffding's inequality, we have that for any $\tau \in\Bbb{R}$,
	\begin{align*}
		\Bbb{P}\left(\widehat{\bT}_{ij}-\lambda \widehat{\bT}_{ik_i}  \geq  \tau + \Bbb{E}\left[\widehat{\bT}_{ij} - \lambda \widehat{\bT}_{ik_i}\right] \right) 
		& \leq 2\exp\left(\frac{-2\tau^2}{\frac{1}{n_j} + \frac{\lambda^2}{n_{k_i}}}\right)\\
		& \leq   2\exp\left(- \frac{2n_{\min} \tau^2}{1+\lambda^2}\right)
		 \leq   2\exp\left(-n_{\min} \tau^2\right),
	\end{align*}
	where $n_{\min} = \min_{k\in[K]}n_k$.
	Setting
	\begin{align*}
	\tau = -\Bbb{E}\left[\widehat{\bT}_{ij}- \lambda \widehat{\bT}_{ik_i}\right] 
	\geq  \lambda^\ast p - q - \frac{1}{n_{k_i}}p   
	\end{align*}
	and using \eqref{eq:proofE} and \eqref{theo-condseparation}, we obtain that for $n$ sufficiently large,
	\begin{align*}
	\Bbb{P}\left(\widehat{\bT}_{ij}>\lambda \widehat{\bT}_{ik_i}  \right) &  \leq 2\exp\left(-n_{\min} \left(c_1\sqrt{\frac{\log(Kn)}{\min_{k}n_k}} -\frac{p}{n_k} \right)^2  \right)\\
	& \leq 2\exp\left(-n_{\min} \left((c_1-1)\frac{\log(Kn)}{n_{\min}} \right)  \right)= \frac{2}{(Kn)^{c_1-1}}
	\end{align*}
	Combining with the bound \eqref{eq:theosparsistency-unionbound},  the probability of event $\mathcal{E}(\lambda)$  (which implies that $\bZ^{(t+1)} = \bZ$) is bounded from below as  
	\begin{align*}
	\Bbb{P}(\mathcal{E}(\lambda)) & \geq 1- n(K-1)\min_{i\in[n], k\in[K]}\Bbb{P}\left(\widehat{\bT}_{ij}>\lambda \widehat{\bT}_{ik_i}  \right)\\
	& \geq 1-\frac{2(K-1)n}{(Kn)^{c_1-1}}\\
	&\geq 1- \frac{2}{Kn^{(c_1 -2)}}.
	\end{align*}
Therefore, with high probability $\bZ$ is a fixed point of the Algorithm\ref{alg:spca-comdet} for any $\lambda\in(\lambda^\ast,1)$.
\end{proof}

\begin{proof}[Proof of Proposition~\ref{prop:estimator}]
Observe that
\begin{align*}
    \|\bA -\widehat{\bV}\bB\widehat{\bV}^\top\|_F^2 & = \text{Tr}(\bA^\top\bA) - 2\text{Tr}(\widehat{\bV}^\top\bA^\top\widehat{\bV}\bB) + \text{Tr}(\bB^\top\widehat{\bV}^\top\widehat{\bV}\bB\widehat{\bV}^\top\widehat{\bV})\\
    & = \|\bB - (\widehat{\bV}^\top\widehat{\bV})^{-1}\widehat{\bV}^\top\bA\widehat{\bV}(\widehat{\bV}^\top\widehat{\bV})^{-1}\|_F^2 + C,
\end{align*}
where $c$ is a constant that does not depend on $\bB$. Therefore $\widehat{\bB}$
\begin{align*}
    \widehat{\bP} & = \argmin_{\bB\in\real^{K\times K}}\|\bA -\widehat{\bV}\bB\widehat{\bV}^\top\|_F^2\\
    & = \widehat{\bV}(\widehat{\bV}^\top\widehat{\bV})^{-1}\widehat{\bV}^\top\bA\widehat{\bV}(\widehat{\bV}^\top\widehat{\bV})^{-1}\widehat{\bV}^\top.
\end{align*}
Suppose that $\widehat{\bV} = \widehat{\bQ}\widehat{\bR}$ for some matrix $\bQ$ with orthonormal columns of size $n\times K$. Then, $\widehat{\bR}$ is a full rank matrix, and therefore
$$(\widehat{\bV}^\top\widehat{\bV})^{-1} = \widehat{\bR}^{-1}(\widehat{\bQ}^\top\widehat{\bQ})^{-1}(\widehat{\bR^\top})^{-1}.$$
Using this equation, we obtain the desired result.
\end{proof}

\end{document}